\setlist{nolistsep}
\crefname{section}{\S}{\S\S}
\lstdefinestyle{customc}{
    belowcaptionskip=1\baselineskip,
    breaklines=true,
    frame=L,	
    xleftmargin=\parindent,
    language=C,
    showstringspaces=false,
    escapeinside={//}{\^^M},
    basicstyle=\scriptsize\ttfamily,
    keywordstyle=\bfseries\color{green!40!black},
    commentstyle=\itshape\color{gray!60!black},
    identifierstyle=\color{blue!50!black},
    stringstyle=\color{orange},
    numbers=left,                    
    numbersep=4pt,                   
    numberstyle=\tiny\color{black},  
    otherkeywords={then,word,process_local,type,xbegin,xabort,xend}
}
\def\S{\ensuremath{\mathcal{S}}}
\def\TS{\mathit{TS}}
\def\shared{\mathit{shared}}
\def\exclusive{\mathit{exclusive}}
\newcommand{\RNum}[1]{\uppercase\expandafter{\romannumeral #1\relax}}
\newcommand{\false}{\mathit{false}}
\newcommand{\remove}[1]{}
\newcommand{\Wset}{\textit{Wset}}
\newcommand{\Rset}{\textit{Rset}}
\newcommand{\Dset}{\textit{Dset}}
\newcommand{\txns}{\textit{txns}}
\newcommand{\Read}{\textit{read}}
\newcommand{\Write}{\textit{write}}
\newcommand{\TryC}{\textit{tryC}}
\newcommand{\TryA}{\textit{tryA}}
\newtheorem{theorem}{Theorem}
\newenvironment{proofsketch}[1][Proof sketch]{\noindent\textbf{#1.} }{\hfill $\Box$\\[2mm]}
\newtheorem{claim}[theorem]{Claim}
\newtheorem{remark}[theorem]{Remark}
\newtheorem{lemma}[theorem]{Lemma}
\title{On the Cost of Concurrency in Hybrid Transactional Memory}
\author{
	Trevor Brown$^1$~~~Srivatsan Ravi$^2$ \\
	$^1$\normalsize Dept. of CS, University of Waterloo \\
	$^2$\normalsize Dept. of CS and Information Sciences Institute, University of Southern California
}
\begin{document}

\maketitle

\begin{abstract}
State-of-the-art \emph{software transactional memory (STM)} implementations achieve 
good performance by carefully avoiding the overhead of \emph{incremental validation}
(i.e., re-reading previously read data items to avoid inconsistency) while
still providing \emph{progressiveness} (allowing transactional aborts only due to \emph{data conflicts}).
Hardware transactional memory (HTM) implementations promise even better performance, 
but offer no progress guarantees.
Thus, they must be combined with STMs, leading to \emph{hybrid} TMs (HyTMs)
in which hardware transactions must be \emph{instrumented} (i.e., access metadata) 
to detect contention with software transactions.

We show that, unlike in progressive STMs, software transactions in progressive HyTMs
cannot avoid incremental validation.
In fact, this result holds even if hardware transactions can \emph{read} metadata 
\emph{non-speculatively}. 
We then present \emph{opaque} HyTM algorithms providing \emph{progressiveness for a subset of transactions} 
that are  optimal in terms of hardware instrumentation. 
We explore the concurrency vs. hardware instrumentation vs. software validation
trade-offs for these algorithms.
Our experiments with Intel and IBM POWER8 HTMs   
seem to suggest that (i) the \emph{cost of concurrency} also exists in practice, 
(ii) it is important to implement HyTMs that provide progressiveness for a maximal set of transactions without incurring high hardware instrumentation overhead or
using global contending bottlenecks and (iii) 
there is no easy way to derive more efficient HyTMs by taking advantage of non-speculative accesses within hardware.

\end{abstract}

\section{Introduction}
\label{sec:intro}
The \emph{Transactional Memory (TM)} abstraction is a synchronization mechanism 
that allows the programmer to \emph{optimistically} execute sequences of shared-memory
operations as \emph{atomic transactions}.
Several software TM designs~\cite{norec, ST95,HLM+03,fraser} have been introduced subsequent to the original TM proposal based in
hardware~\cite{HM93}. 
The original dynamic STM implementation DSTM~\cite{HLM+03} ensures that a transaction aborts only if there is a read-write \emph{data conflict} with a concurrent
transaction (\`a la \emph{progressiveness}~\cite{tm-book}). However, to satisfy \emph{opacity}~\cite{tm-book}, read operations in DSTM must \emph{incrementally} validate
the responses of all previous read operations to avoid inconsistent executions. 
This results in quadratic  (in the size of the transaction's read
set) step-complexity for transactions. Subsequent STM 
implementations like NOrec~\cite{norec} and TL2~\cite{DSS06}
minimize the impact on performance due to incremental validation.
NOrec uses a global sequence lock that is read at the start of a transaction and performs \emph{value-based}
validation during read operations only if the value of the global lock has been changed (by an updating transaction) 
since reading it.
TL2, on the other hand, eliminates incremental validation completely.
Like NOrec, it uses a global sequence lock, but each data item also 
has an associated sequence lock value that is updated alongside the data item.
When a data item is read, if its associated sequence lock value is different 
from the value that was read from the sequence lock at the start of the transaction, then the transaction aborts.

In fact, STMs like TL2 and NOrec ensure progress in the absence of data conflicts with 
O(1) step complexity read operations and \emph{invisible reads} (read operations which 
do not modify shared memory).
Nonetheless, TM designs that are implemented entirely in software still incur significant performance overhead.
Thus, current CPUs have included instructions to mark a block of memory accesses as transactional~\cite{Rei12, asf, bluegene}, allowing them to be executed \emph{atomically} in hardware.
Hardware transactions promise better performance than STMs, but they offer no progress guarantees 
since they may experience \emph{spurious} aborts. This motivates the need for
\emph{hybrid} TMs in which the \emph{fast} hardware transactions are 
complemented with \emph{slower} software transactions that do not have spurious aborts.

To allow hardware transactions in a HyTM to detect conflicts with software transactions, hardware transactions must be \emph{instrumented} to perform additional metadata accesses, which introduces overhead.
Hardware transactions typically provide automatic conflict detection at cacheline granularity,
thus ensuring that a transaction will be aborted if it experiences memory contention on a cacheline.
This is at least the case with Intel's Transactional Synchronization Extensions~\cite{haswell}.
The IBM POWER8 architecture additionally allows hardware transactions to access metadata \emph{non-speculatively}, 
thus bypassing automatic conflict detection. While this has the advantage of potentially reducing contention aborts
in hardware, this makes the design of HyTM implementations potentially harder to prove correct.

In \cite{hytm14disc}, it was shown that hardware transactions in opaque progressive HyTMs must perform
at least one metadata access per transactional read and write.
In this paper, we show that in opaque progressive HyTMs with invisible reads, 
software transactions \textit{cannot} avoid incremental validation.
Specifically, we prove that \textit{each read operation} of a software transaction in a progressive HyTM
must necessarily incur a validation cost that is \emph{linear} 
in the size of the transaction's read set. 
This is in contrast to TL2 which is progressive and has constant complexity read operations.
Thus, in addition to the linear instrumentation cost in hardware transactions, there is a quadratic step complexity cost in software transactions.

We then present opaque HyTM algorithms providing \emph{progressiveness for a subset of transactions} that are  
optimal in terms of hardware instrumentation.
Algorithm~1 is progressive for all transactions, but it incurs high instrumentation overhead in practice.
Algorithm~2 avoids all instrumentation in fast-path read operations, but is progressive only for slow-path reading transactions.
We also sketch how \emph{some} hardware instrumentation can be performed \textit{non-speculatively} without violating opacity.

Extensive experiments were performed to characterize the \textit{cost of concurrency} in practice.
We studied the instrumentation-optimal algorithms, as well as TL2, Transactional Lock Elision (TLE)~\cite{tle} and Hybrid NOrec~\cite{hynorecriegel} on both Intel and IBM POWER architectures.
Each of the algorithms we studied contributes to an improved understanding of the concurrency vs. hardware instrumentation vs. software validation trade-offs for HyTMs.
Comparing results between the very different Intel and IBM POWER architectures also led to new insights.
%
%
Collectively, our results suggest 
the following.
(i) The \emph{cost of concurrency} is significant in practice; high hardware instrumentation impacts performance negatively on Intel and much more so on POWER8 due to its limited transactional cache capacity.
(ii) It is important to implement HyTMs that provide progressiveness for a maximal set of transactions without incurring high hardware instrumentation overhead or using global contending bottlenecks.
(iii) There is no easy way to derive more efficient HyTMs by taking advantage of non-speculative accesses supported within the fast-path in POWER8 processors. 

\vspace{1mm}\noindent\textbf{Roadmap.}
The rest of the paper is organized as follows.
\cref{sec:hytm} presents details of the HyTM model that extends the model introduced in \cite{hytm14disc}.
\cref{sec:lb} presents our main lower bound result on the step-complexity of slow-path transactions in progressive HyTMs
while \cref{sec:hytmalgos} presents opaque HyTMs that are progressive for a subset of transactions.
\cref{sec:eval} presents results from experiments on Intel Haswell and IBM POWER8 architectures which provide a clear characterization of the cost
of concurrency in HyTMs, and study the impact of 
non-speculative (or direct) accesses within hardware transactions on performance.
\cref{sec:rel} presents the related work along with concluding remarks. Formal proofs 
appear 
in the Appendix.
%
\section{Hybrid transactional memory (HyTM)}
\label{sec:hytm}
%
\vspace{1mm}\noindent\textbf{Transactional memory (TM).} 
A \emph{transaction} is a sequence of \emph{transactional operations}
(or \emph{t-operations}), reads and writes, performed on a set of \emph{transactional objects} 
(\emph{t-objects}). 
A TM \emph{implementation} provides a set of
concurrent \emph{processes} with deterministic algorithms that implement reads and
writes on t-objects using  a set of \emph{base objects}.

\vspace{1mm}\noindent\textbf{Configurations and executions.} 
A \emph{configuration} of a TM implementation specifies the state of each base object and each process. 
In the \emph{initial} configuration, each base object has its initial value and each process is in its initial state. 
An \emph{event} (or \emph{step}) of a transaction invoked by some process is an invocation of a t-operation, 
a response of a t-operation, or an atomic \emph{primitive} operation applied to base object along with its response. 
An \emph{execution fragment} is a (finite or infinite) sequence of events $E = e_1,e_2,\dots$. 
An \emph{execution} of a TM implementation $\mathcal{M}$ is an
execution fragment where, informally, each event respects the
specification of base objects and the algorithms specified by $\mathcal{M}$.

For any finite execution $E$ and execution fragment $E'$, $E\cdot E'$ denotes the concatenation of $E$ and $E'$,
and we say that $E\cdot E'$ is an \emph{extension}
of $E$.
For every transaction identifier $k$,
$E|k$ denotes the subsequence of $E$ restricted to events of
transaction $T_k$.
If $E|k$ is non-empty,
we say that $T_k$ \emph{participates} in $E$,
Let $\txns(E)$ denote the set of transactions that participate in $E$.
Two executions $E$ and $E'$
are \emph{indistinguishable} to a set $\mathcal{T}$ of transactions, if
for each transaction $T_k \in \mathcal{T}$, $E|k=E'|k$.
A transaction $T_k\in \txns(E)$ is \emph{complete in $E$} if
$E|k$ ends with a response event.
The execution $E$ is \emph{complete} if all transactions in $\txns(E)$
are complete in $E$.
A transaction $T_k\in \txns(E)$ is \emph{t-complete} if $E|k$
ends with $A_k$ or $C_k$; otherwise, $T_k$ is \emph{t-incomplete}.
We consider the dynamic programming model: the \emph{read set} (resp., the \emph{write set}) of a transaction $T_k$ in an execution $E$,
denoted $\Rset_E(T_k)$ (resp., $\Wset_E(T_k)$), is the set of t-objects that $T_k$ attempts to read (and resp. write) 
by issuing a t-read (resp., t-write) invocation in $E$ (for brevity, we sometimes 
omit the subscript $E$ from the notation).

We assume that base objects are accessed with \emph{read-modify-write} (rmw) primitives. 
A rmw primitive event on a base object is \emph{trivial} if, in any configuration, its application
does not change the state of the object. 
Otherwise, it is called \emph{nontrivial}.
Events $e$ and $e'$ of an execution $E$  \emph{contend} on a base
object $b$ if they are both primitives on $b$ in $E$ and at least 
one of them is nontrivial.

\vspace{1mm}\noindent\textbf{Hybrid transactional memory executions.}
We now describe the execution model of a \emph{Hybrid transactional memory (HyTM)} implementation.
In our HyTM model, shared memory configurations may be modified by accessing base objects via two kinds of
primitives: \emph{direct} and \emph{cached}.
(i) In a direct (also called non-speculative) access, the rmw primitive operates on the memory state:
the direct-access event atomically reads the value of the object in
the shared memory and, if necessary, modifies it.
(ii) In a cached access performed by a process $i$, the rmw primitive operates on the \emph{cached}
state recorded in process $i$'s \emph{tracking set} $\tau_i$.

More precisely, $\tau_i$ is a set of triples $(b, v, m)$ where $b$ is a base object identifier, $v$ is a value, 
and $m \in \{\shared, \exclusive\}$ is an access \emph{mode}. 
The triple $(b, v, m)$ is added to the tracking set when $i$ performs a cached
rmw access of $b$, where $m$ is set to $\exclusive$ if the access is
nontrivial, and to $\shared$ otherwise.  
We assume that there exists some constant $\TS$
such that the condition $|\tau_i| \leq \TS$ must always hold; this
condition will be enforced by our model.
A base object $b$ is \emph{present} in $\tau_i$ with mode $m$ if $\exists v, (b,v,m) \in \tau_i$.

%
\vspace{1mm}\noindent\textbf{Hardware aborts.}
A tracking set can be \emph{invalidated} by a concurrent process: 
if, in a configuration $C$ where  $(b,v,\exclusive)\in\tau_i$
(resp., $(b,v,\shared)\in\tau_i)$,  a process $j\neq i$ applies any primitive 
(resp., any \emph{nontrivial} primitive) to $b$, then $\tau_i$ becomes
\emph{invalid} and any subsequent event invoked by $i$
sets $\tau_i$ to $\emptyset$ and returns $\bot$. We refer to this event as a \emph{tracking set abort}.

Any transaction executed by a \emph{correct} process that performs at least one cached access must necessarily perform a \emph{cache-commit} primitive 
that determines the terminal response of the transaction.
A cache-commit primitive
issued by process $i$ with
a valid $\tau_i$ does the following: for each base object $b$ such that $(b,v,\exclusive) \in \tau_i$, the value of $b$ in $C$ is updated to $v$. 
Finally, $\tau_i$ is set to $\emptyset$ and the operation returns $\textit{commit}$. 
We assume that a fast-path transaction $T_k$ returns $A_k$
as soon a cached primitive or \emph{cache-commit} returns $\bot$.

\vspace{1mm}\noindent\textbf{Slow-path and fast-path transactions.}
We partition HyTM transactions into \emph{fast-path transactions} and \emph{slow-path transactions}.
A slow-path transaction models a regular software transaction.
An event of a slow-path transaction is either an invocation or response of a t-operation, or
a direct rmw primitive on a base object. 
A fast-path transaction essentially encapsulates a hardware transaction. Specifically, in any execution $E$,
we say that a transaction $T_k\in \ms{txns}(E)$ is a fast-path transaction if $E|k$ contains at least one cached event.
An event of a \emph{hardware transaction} is either an invocation or response of a t-operation, or
a direct trivial access or a cached access, or a cache-commit primitive.
\begin{remark}[Tracking set aborts]
\label{re:traborts}
Let $T_k \in \ms{txns}(E)$ be any t-incomplete fast-path transaction executed by process $i$, 
where $(b,v,\exclusive)\in\tau_i$ (resp., $(b,v,\shared)\in\tau_i$) after execution $E$, and $e$ be any event (resp., nontrivial event) 
that some process $j\neq i$ is poised to apply after $E$.
The next event of $T_k$ in any extension of $E\cdot e$ is $A_k$.
%
\end{remark}
\begin{remark}[Capacity aborts]
\label{re:capacity}
Any cached access performed by a process $i$ executing a fast-path 
transaction $T_k$; $|\Dset(T_k)|>1$ first checks the condition $|\tau_i|=\TS$, where $\TS$ is a pre-defined constant, and if so, it
sets $\tau_i=\emptyset$ and immediately returns $\bot$. 
\end{remark}

\vspace{1mm}\noindent\textbf{Direct reads within fast-path.}
Note that we specifically allow hardware transactions to perform reads without adding the corresponding base object to
the process' tracking set, thus modeling the \emph{suspend/resume} instructions supported by 
IBM POWER8 architectures. Note that Intel's HTM does not support this feature: an event of a fast-path transaction
does not include any direct access to base objects.

\vspace{1mm}\noindent\textbf{HyTM properties.}
We consider the TM-correctness property of \emph{opacity}~\cite{tm-book}: an execution
$E$ is opaque if there exists a \emph{legal} (every t-read of a t-object returns the value of its latest committed t-write) sequential execution $S$ equivalent to some t-completion of $E$
that respects the \emph{real-time ordering} of transactions in $E$.
We also assume a weak \emph{TM-liveness} property for t-operations: every t-operation returns a matching
response within a finite number of its own steps if running step-contention free from a configuration in which every other transaction is t-complete.
Moreover, we focus on HyTMs that provide \emph{invisible reads}: t-read operations do not perform
nontrivial primitives in any execution.
%
%
%
%

\section{Progressive HyTM must perform incremental validation}
\label{sec:lb}
In this section, we show that it is impossible to implement opaque \emph{progressive} HyTMs with \emph{invisible reads}
with $O$(1) step-complexity read operations for slow-path transactions. 
This result holds even if fast-path transactions may perform
direct trivial accesses.

Formally, we say that a HyTM implementation $\mathcal{M}$ is progressive
for a set $\mathcal{T}$ of transactions
if in any execution $E$ of $\mathcal{M}$; $\mathcal{T} \subseteq \ms{txns}(E)$, 
if any transaction $T_k \in \mathcal{T}$ returns $A_k$ in $E$, there exists 
another concurrent transaction $T_m$ that \emph{conflicts} (both access the same t-object and at least one writes) with $T_k$ in $E$~\cite{tm-book}.

We construct an execution of a progressive opaque HyTM in which every t-read performed by a read-only slow-path transaction
must access linear (in the size of the read set) number of distinct base objects.
\begin{theorem}
\label{th:impossibility}
Let $\mathcal{M}$ be any progressive opaque HyTM implementation providing invisible reads.
There exists an execution $E$ of $\mathcal{M}$ and some slow-path read-only transaction $T_k \in \ms{txns}(E)$
that incurs a time complexity of $\Omega (m^2)$; $m=|\Rset(T_k)|$.
\end{theorem}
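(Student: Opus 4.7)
My plan is to exhibit an execution in which a slow-path read-only transaction $T_k$ reads $m$ t-objects $X_1,\dots,X_m$ in order and, in its solo execution $E_0$, the $i$-th t-read accesses $\Omega(i)$ pairwise distinct base objects; summing over $i$ yields the $\Omega(m^2)$ bound. By TM-liveness, $E_0$ completes, and opacity applied to the solo execution forces each $\Read(X_j)$ to return the initial value $v_j^0$. Let $B_i$ denote the set of base objects accessed by $T_k$ during its $i$-th read in $E_0$; the goal reduces to proving $|B_i|\ge i-1$.

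For every pair $(i,j)$ with $1\le j<i\le m$ I would introduce an updating transaction $W_{i,j}$ that writes fresh values $v_j^1$ and $v_i^1$ to $X_j$ and $X_i$ respectively and commits, and splice its complete isolated execution into $E_0$ immediately after $T_k$'s $(i-1)$-th read. The splice is sound because invisible reads guarantee that $T_k$'s prefix performs no nontrivial primitive; thus $W_{i,j}$ can run as if from the initial configuration, and TM-liveness applied to $W_{i,j}$ (which runs step-contention free during the splice) ensures it commits. Let $\mu_{i,j}$ be the set of base objects $W_{i,j}$ modifies, and call the resulting execution $E_{i,j}$. Opacity then rules out $T_k$'s $i$-th read in $E_{i,j}$ returning $v_i^1$: combined with the earlier return of $v_j^0$ for $X_j$, no legal sequential witness can place $T_k$ consistently on either side of $W_{i,j}$, so the only admissible outcomes are returning $v_i^0$ or aborting.

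The heart of the argument is the indistinguishability step. If $B_i\cap\mu_{i,j}=\emptyset$, then every base object $T_k$ touches during its $i$-th t-read in $E_{i,j}$ has the same value as in $E_0$, so by determinism $T_k$ reproduces its $E_0$ trace and returns $v_i^0$ without aborting; extending the now-indistinguishable tail, $T_k$ would also complete its remaining reads and commit. To convert this ``silent'' outcome into an opacity violation I would append a helper reading transaction $T'$ started strictly after $T_k$'s commit that reads $X_j$ and $X_i$: by real-time order $T_k\prec T'$ and $W_{i,j}\prec T'$, and legality forces $T'$ to return $v_j^1,v_i^1$. Combining this with the fact that $T_k$ committed having returned $v_j^0$ (which pins $T_k$ before $W_{i,j}$ in any serialization), together with a carefully chosen supplementary writer engineered to create a residual cyclic ordering constraint on $\{T_k,W_{i,j},T'\}$, yields the needed contradiction. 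Establishing this cycle rigorously is the main obstacle: it requires opacity on already-returned values, invisibility of reads (so that $W_{i,j}$ and the helpers can be spliced in without disturbing $T_k$), and progressiveness (so that $T_k$ is not aborted gratuitously) to cooperate precisely.

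Finally, to aggregate across $j$, I choose the fresh values $v_j^1$ (and if needed auxiliary ``tag'' t-objects that $W_{i,j}$ additionally writes) so that for distinct $j\neq j'$ each footprint $\mu_{i,j}$ contains a ``per-$j$'' base object absent from $\mu_{i,j'}$. Pigeonhole across $j\in\{1,\dots,i-1\}$ then lifts the per-pair non-emptiness $B_i\cap\mu_{i,j}\neq\emptyset$ to the cardinality bound $|B_i|\ge i-1$, and summing $\sum_{i=2}^m(i-1)=\Theta(m^2)$ delivers the claimed time-complexity lower bound.
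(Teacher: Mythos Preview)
Your proposal has two structural gaps that prevent it from going through.

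First, your baseline comparison is the wrong one. In the solo execution $E_0$ the $i$-th read returns the initial value $v_i^0$, and after splicing $W_{i,j}$ the outcome ``$\Read(X_i)\to v_i^0$'' is \emph{perfectly consistent} with opacity: serialize $T_k$ before $W_{i,j}$ and everything is legal. Your helper $T'$ and ``supplementary writer'' do not manufacture a contradiction --- $T_k,\,W_{i,j},\,T'$ is a valid serialization --- and the phrase ``a carefully chosen supplementary writer engineered to create a residual cyclic ordering constraint'' is precisely where the proof would have to happen, yet nothing is specified. The paper avoids this trap by making the baseline an execution in which the $i$-th read returns the \emph{new} value: it runs $T_\phi$'s first $i-1$ reads, then a fast-path single-write transaction $T_i$ that writes $nv_i$ to $X_i$ and commits, and argues (by indistinguishability from the execution where $T_i$ precedes $T_\phi$) that $\Read_\phi(X_i)$ must return $nv_i$. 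Only against \emph{that} baseline does inserting a second writer $T_\ell$ on $X_\ell$ force the $i$-th read to change behaviour, since returning $nv_i$ would now pin $T_\phi$ after $T_\ell$ while $\Read_\phi(X_\ell)$ returned the old value.

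Second, your aggregation step does not establish the footprint disjointness you need. Nothing about ``auxiliary tag t-objects'' or fresh values forces $\mu_{i,j}$ and $\mu_{i,j'}$ to differ in even one base object: an arbitrary implementation could route every commit through the same global counter. The paper's disjointness comes from a HyTM-specific lemma (Lemma~\ref{lm:hytm}): because a fast-path transaction's tracking set is invalidated by any contention (Remark~\ref{re:traborts}), two fast-path transactions with disjoint data sets in a progressive HyTM cannot touch a common base object, or one would incur a tracking-set abort with no data conflict, violating progressiveness. The paper therefore uses \emph{single-write fast-path} writers $T_1,\dots,T_{i-1}$ --- pairwise disjoint data sets, hence pairwise disjoint base-object footprints, and no capacity aborts by Remark~\ref{re:capacity}. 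Your $W_{i,j}$ all share $X_i$, so even if you made them fast-path the lemma would not apply; and you never invoke fast-path transactions or the tracking-set mechanism at all, which is the entire source of leverage distinguishing this HyTM bound from what is achievable in plain STM.
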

\begin{proofsketch}
We construct an execution of a read-only slow-path transaction $T_{\phi}$ that performs $m \in \mathbb{N}$
distinct t-reads of t-objects $X_1,\ldots , X_m$. We show inductively that for each 
$i\in \{1,\ldots , m\}$; $m \in \mathbb{N}$, the $i^{\ms{th}}$ t-read must access $i-1$ distinct base objects
during its execution. The (partial) steps in our execution are depicted in Figure~\ref{fig:indis}.

For each $i\in \{1,\ldots , m\}$, $\mathcal{M}$ has an execution of the form depicted in Figure~\ref{sfig:inv-2}.
Start with the complete step contention-free execution of slow-path read-only transaction $T_{\phi}$ that performs
$(i-1)$ t-reads: $\Read_{\phi}(X_1)\cdots \Read_{\phi}(X_{i-1})$, followed by the t-complete step contention-free execution of a fast-path transaction $T_{i}$
that writes $nv_i\neq v_i$ to $X_i$ and commits and then the complete step contention-free execution fragment of $T_{\phi}$ that performs its $i^{th}$ t-read:
$\Read_{\phi}(X_i) \rightarrow nv_i$. Indeed, by progressiveness, $T_i$ cannot incur tracking set aborts and since it accesses only a single t-object, it cannot incur capacity aborts.
Moreover, in this execution, the t-read of $X_i$ by slow-path transaction $T_{\phi}$ must return the value $nv$ written by fast-path transaction $T_i$ since this execution is indistinguishable
to $T_{\phi}$ from the execution in Figure~\ref{sfig:inv-1}. 

We now construct $(i-1)$ different executions of the form depicted in Figure~\ref{sfig:inv-3}: for each $\ell \leq (i-1)$, 
a fast-path transaction $T_{\ell}$ (preceding $T_i$ in real-time ordering, but invoked following the $(i-1)$ t-reads by $T_{\phi}$) writes $nv_{\ell}\neq v$ to $X_{\ell}$ and commits, followed by
the t-read of $X_i$ by $T_{\phi}$. Observe that, $T_{\ell}$ and $T_i$ which access mutually disjoint data sets cannot contend on each other since if they did, they would concurrently contend
on some base object and incur a tracking set abort, thus violating progressiveness.
Indeed, by the TM-liveness property we assumed (cf. Section~\ref{sec:hytm}) and invisible reads for $T_{\phi}$, each of these $(i-1)$ executions exist. 

In each of these $(i-1)$ executions, the final t-read of $X_i$ cannot return the new value $nv$:
the only possible serialization for transactions is $T_{\ell}$, $T_i$, $T_{\phi}$; but the $\Read_{\phi}(X_{\ell})$
performed by $T_k$ that returns the initial value $v$ is not legal in this serialization---contradiction to the assumption of opacity.
In other words, slow-path transaction $T_{\phi}$ is forced to verify the validity of t-objects in $\Rset(T_{\phi})$.
Finally, we note that, for all $\ell, \ell' \leq (i-1)$;$\ell' \neq \ell$, 
fast-path transactions $T_{\ell}$ and $T_{\ell'}$ access mutually disjoint sets of base objects thus forcing the t-read of $X_i$ to access least $i-1$ different base objects
in the worst case.
Consequently, for all $i \in \{2,\ldots, m\}$, slow-path transaction $T_{\phi}$ must perform at least $i-1$ steps 
while executing the $i^{th}$ t-read in such an execution.
\end{proofsketch}

\noindent\textbf{How STM implementations mitigate the quadratic lower bound step complexity.}
NOrec~\cite{norec} is a progressive opaque STM that minimizes the average step-complexity resulting from incremental 
validation of t-reads. Transactions read a global versioned lock at the start, and perform value-based validation
during t-read operations \emph{iff} the global version has changed.
TL2~\cite{DSS06} improves over NOrec by circumventing the lower bound
of Theorem~\ref{th:impossibility}. Concretely, TL2 associates a global version with each t-object updated during
a transaction and performs validation with O(1) complexity during t-reads by simply verifying if the version
of the t-object is greater than the global version read at the start of the transaction. Technically,
NOrec and algorithms in this paper provide a stronger definition of progressiveness: a transaction may abort
only if there is a prefix in which it conflicts with another transaction and both are t-incomplete. TL2 on the other hand allows
a transaction to abort due to a concurrent conflicting transaction.

\vspace{1mm}\noindent\textbf{Implications for disjoint-access parallelism in HyTM.}
The property of disjoint-access parallelism (DAP), in its \emph{weakest} form, ensures that two transactions 
concurrently contend on the same base object 
only if their data 
sets are connected in the \emph{conflict graph}, capturing 
data-set overlaps among all concurrent transactions~\cite{AHM09}. It is well known that weak DAP STMs with invisible reads must perform incremental validation even if the required TM-progress condition requires
transactions to commit only in the absence of any concurrent transaction~\cite{tm-book,prog15-pact}. For example, DSTM~\cite{HLM+03} is a weak DAP STM that is progressive and consequently incurs the validation
complexity. On the other hand, TL2 and NOrec are not weak DAP since they employ a global versioned lock that mitigates the cost of incremental validation, but this allows two transactions accessing
disjoint data sets to concurrently contend on the same memory location. Indeed, this inspires the proof of Theorem~\ref{th:impossibility}. 
\begin{figure*}[!t]
\begin{center}
	\begin{subfigure}{\linewidth}{\scalebox{0.6}[0.6]{\begin{tikzpicture}
\node (r1) at (3,0) [] {};
\node (r2) at (12.2,0) [] {};

\node (w1) at (-2,0) [] {};

\draw (r1) node [below] {\normalsize {$R_{\phi}(X_1) \cdots R_{\phi}(X_{i-1})$}};
\draw (r1) node [above] {\normalsize {$i-1$ t-reads}};

\draw (r2) node [above] {\normalsize {$R_{\phi}(X_i)\rightarrow nv$}};

\draw (w1) node [above] {\normalsize {$W_i(X_i,nv)$}}; 
\draw (w1) node [below] {\normalsize {commits}};

\node[draw,align=left] at (8,1) {{\large Slow-Path}};
\node[draw,align=left] at (-2,1) {{\large Fast-Path}};

\begin{scope}   
\draw [|-|,thick] (0,0) node[left] {$T_{\phi}$} to (6,0);
\draw [|-|,thick] (11,0) node[left] {} to (13.5,0);
\draw [-,dotted] (0,0) node[left] {} to (13.5,0);
\end{scope}
\begin{scope}   
\draw [|-|,thick] (-3,0) node[left] {$T_i$} to (-1,0);
\end{scope}
\end{tikzpicture}}}
	\caption{Slow-path transaction $T_{\phi}$ performs $i-1$ distinct t-reads (each returning the initial value) followed by the t-read of $X_i$ that returns value $nv$ 
	written by fast-path transaction $T_i$}\label{sfig:inv-1}
	\end{subfigure}
        \\
        \vspace{2mm}
	\begin{subfigure}{\linewidth}{\scalebox{0.6}[0.6]{\begin{tikzpicture}
\node (r1) at (3,0) [] {};
\node (r3) at (14.8,0) [] {};


\node (w2) at (12.5,-2) [] {};

\draw (r1) node [below] {\normalsize {$R_{\phi}(X_1) \cdots R_{\phi}(X_{i-1})$}};
\draw (r1) node [above] {\normalsize {$i-1$ t-reads}};

\draw (w2) node [above] {\normalsize {$W_{i}(X_{i},nv)$}}; 
\draw (w2) node [below] {\normalsize {commits}};

\draw (r3) node [above] {\normalsize {$R_{\phi}(X_{i})\rightarrow nv$}};
\node[draw,align=left] at (10,1) {{\large Slow-Path}};
\node[draw,align=left] at (12.5,-1) {{\large Fast-Path}};

\begin{scope}   
\draw [|-|,thick] (0,0) node[left] {$T_{\phi}$} to (6,0);
\draw [|-|,dotted] (0,0) node[left] {$T_{\phi}$} to (16,0);
\draw [|-|,thick] (13.5,0) node[left] {} to (16,0);
\end{scope}
\begin{scope}   
\draw [|-|,thick] (11.5,-2) node[left] {$T_i$} to (13.5,-2);
\end{scope}
\end{tikzpicture}}}
	\caption{Fast-path transaction $T_i$ does not contend with any of the $i-1$ t-reads performed by $T_{\phi}$ and must be committed in this execution since it cannot incur a tracking set or capacity abort.
	The t-read of $X_i$ must return $nv$ because this execution is indistinguishable to $T_{\phi}$ from \ref{sfig:inv-1}}
	\label{sfig:inv-2} 
	\end{subfigure}
	\\
	\vspace{2mm}
	\begin{subfigure}{\linewidth}{\scalebox{0.6}[0.6]{\begin{tikzpicture}
\node (r1) at (3,0) [] {};
\node (r3) at (14.8,0) [] {};
\node (r33) at (14.8,-5) [] {};
\node (r11) at (3,-5) [] {};

\node (w2) at (12.5,-2) [] {};
\node (w3) at (8,-2) [] {};
\node (w4) at (8,-7) [] {};

\node (w22) at (12.5,-7) [] {};

\draw (r1) node [below] {\normalsize {$R_{\phi}(X_1) \cdots R_{\phi}(X_{i-1})$}};
\draw (r1) node [above] {\normalsize {$i-1$ t-reads}};

\draw (w2) node [above] {\normalsize {$W_{i}(X_{i},nv)$}}; 
\draw (w2) node [below] {\normalsize {commits}};

\draw (w22) node [above] {\normalsize {$W_{i}(X_{i},nv)$}}; 
\draw (w22) node [below] {\normalsize {commits}};

\draw (w3) node [above] {\normalsize {$W_{i-1}(X_{i-1},nv)$}}; 
\draw (w3) node [below] {\normalsize { commits}};

\draw (w4) node [above] {\normalsize {$W_{1}(X_{1},nv)$}}; 
\draw (w4) node [below] {\normalsize { commits}};

\draw (r3) node [above] {\normalsize {$R_{\phi}(X_{i})\rightarrow ?$}};
\node[draw,align=left] at (10,1) {{\large Slow-Path}};
\node[draw,align=left] at (12.5,-1) {{\large Fast-Path}};
\node[draw,align=left] at (8,-1) {{\large Fast-Path}};

\node[draw,align=left] at (8,-6) {{\large Fast-Path}};
\node[draw,align=left] at (12.5,-6) {{\large Fast-Path}};
\begin{scope}   
\draw [|-|,thick] (0,0) node[left] {$T_{\phi}$} to (6,0);
\draw [|-|,dotted] (0,0) node[left] {$T_{\phi}$} to (16,0);
\draw [|-|,thick] (13.5,0) node[left] {} to (16,0);
\end{scope}
\begin{scope}   
\draw (r33) node [above] {\normalsize {$R_{\phi}(X_{i})\rightarrow ?$}};
\node[draw,align=left] at (10,-4) {{\large Slow-Path}};

\draw (r11) node [below] {\normalsize {$R_{\phi}(X_1) \cdots R_{\phi}(X_{i-1})$}};
\draw (r11) node [above] {\normalsize {$i-1$ t-reads}};

\draw [|-|,thick] (0,-5) node[left] {$T_{\phi}$} to (6,-5);
\draw [|-|,dotted] (0,-5) node[left] {$T_{\phi}$} to (16,-5);
\draw [|-|,thick] (13.5,-5) node[left] {} to (16,-5);
\end{scope}

\begin{scope}   
\draw [|-|,thick] (7,-2) node[left] {$T_{i-1}$} to (9,-2);
\draw [|-|,thick] (11.5,-2) node[left] {$T_i$} to (13.5,-2);
\draw [-,dashed] (10,-2.5)  to (10,-3.5);
\draw [|-|,thick] (7,-7) node[left] {$T_1$} to (9,-7);
\draw [|-|,thick] (11.5,-7) node[left] {$T_i$} to (13.5,-7);

\end{scope}
\end{tikzpicture}}}
	 \caption{In each of these each $i-1$ executions, fast-path transactions cannot incur a tracking set or capacity abort. By opacity, the t-read of $X_i$ by $T_{\phi}$ cannot return new value $nv$.
	Therefore, to distinguish the $i-1$ different executions, t-read of $X_i$ by slow-path transaction $T_{\phi}$ is forced
	to access $i-1$ different base objects}
	\label{sfig:inv-3}
	\end{subfigure}
	\caption{Proof steps for Theorem~\ref{th:impossibility}
        \label{fig:indis}} 
\end{center}
\end{figure*}
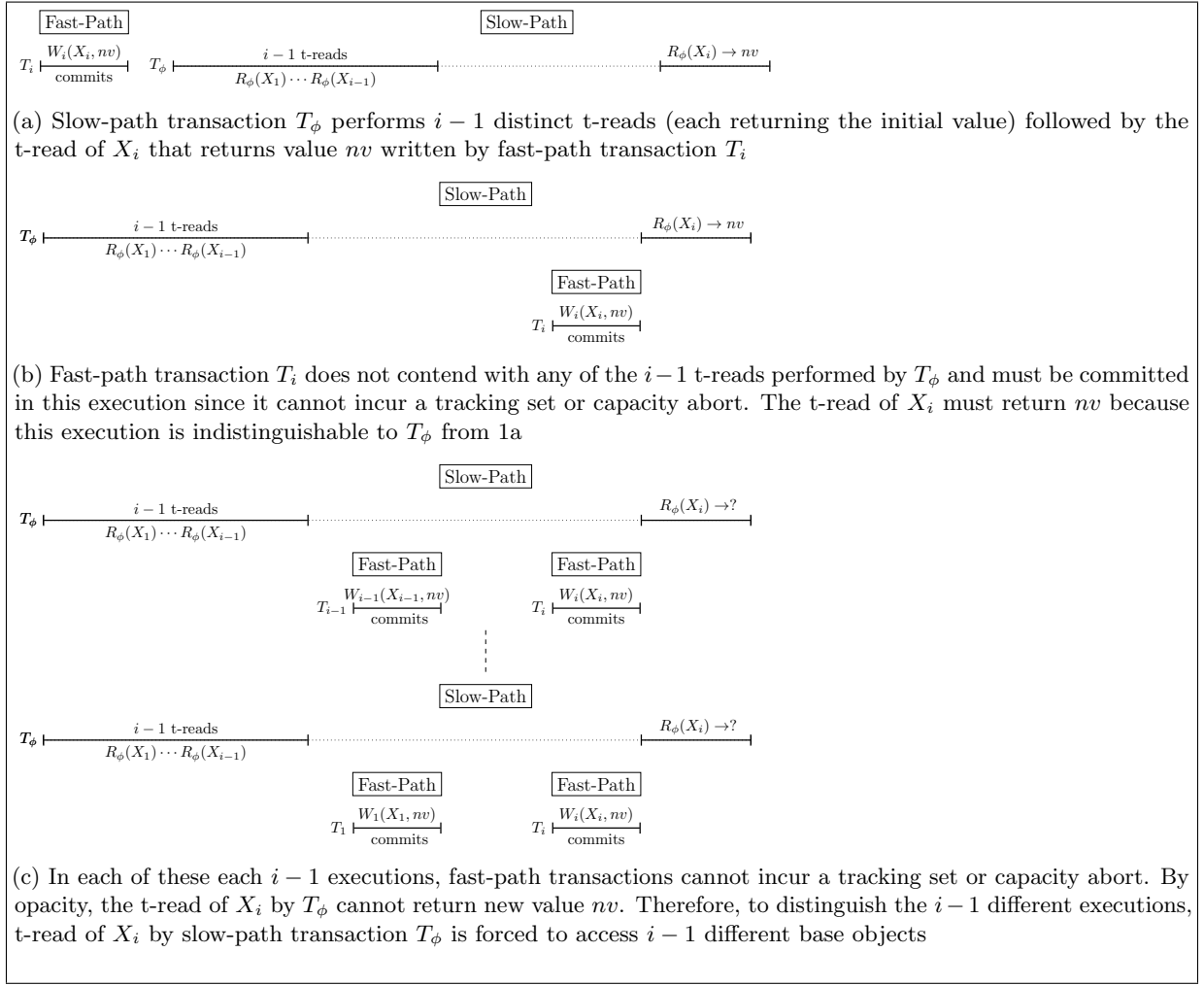

\vspace{-1mm}
\section{Hybrid transactional memory algorithms}\label{sec:hytmalgos}
\begin{figure*}[!t]
      
     \scalebox{.64}[.64]{
     \begin{tabularx}{\textwidth}{c|c|c|c|c}
	~~~~~ & Algorithm~\ref{alg:inswrite} & Algorithm~\ref{alg:inswrite2} & TLE & HybridNOrec\\ 
	Instrumentation in fast-path reads & per-read & constant & constant & constant \\ 
	Instrumentation in fast-path writes & per-write & per-write & constant & constant \\ 
	Validation in slow-path reads & $\Theta(|Rset|)$ & $O(|Rset|)$ & none & $O(|Rset|)$, but validation only if concurrency \\ 
	h/w-s/f concurrency & prog. & prog. for slow-path readers & zero & not prog., but small contention window \\ 
	Direct accesses inside fast-path & yes & no & no & yes \\ 
	opacity & yes & yes & yes & yes 
   \end{tabularx}
\caption{Table summarizing complexities of HyTM implementations}\label{fig:main}    
}
\end{figure*}
%
\noindent\textbf{Instrumentation-optimal progressive HyTM.}
We describe a HyTM algorithm that is a tight bound for Theorem~\ref{th:impossibility} and the instrumentation cost on the fast-path transactions established in \cite{hytm14disc}.
Pseudocode appears in Algorithm~\ref{alg:inswrite}.
For every t-object $X_j$, our implementation maintains a base object $v_j$ that stores the value of $X_j$
and a \emph{sequence lock} $r_{j}$. 

\vspace{1mm}\noindent\textit{Fast-path transactions:}
For a fast-path transaction $T_k$ executed by process $p_i$, the $\Read_k(X_j)$ implementation first reads $r_j$ (direct)
and returns $A_k$ if some other process $p_j$ holds a lock on $X_j$.
Otherwise, it returns the value of $X_j$.
As with $\Read_k(X_j)$, the $\Write (X_j,v)$ implementation returns $A_k$ if some other process $p_j$ holds a lock on $X_j$; otherwise
process $p_i$ increments the sequence lock $r_j$. If the cache has not been invalidated, $p_i$ updates the shared memory
during $\TryC_k$ by invoking the $\ms{commit-cache}$ primitive.

\vspace{1mm}\noindent\textit{Slow-path read-only transactions:}
Any $\Read_k(X_j)$ invoked by a slow-path transaction first reads the value of the t-object from $v_j$, 
adds $r_j$ to $\Rset(T_k)$ if its not held by a concurrent transaction
and then performs \emph{validation} on its entire read set to check if any of them have been modified. 
If either of these conditions is true,
the transaction returns $A_k$. Otherwise, it returns the value of $X_j$. 
Validation of the read set is performed by re-reading the values of the sequence lock entries stored in $\Rset(T_k)$.

\vspace{1mm}\noindent\textit{Slow-path updating transactions:}
An updating slow-path transaction $T_k$ attempts to obtain exclusive write access to its 
entire write set.
If all the locks on the write set were acquired successfully, $T_k$ performs validation of the read set and if successful, updates the values of
the t-objects in shared memory, releases the locks and returns $C_k$; else $p_i$ aborts the transaction.

\vspace{1mm}\noindent\textit{Direct accesses inside fast-path:}
Note that opacity is not violated even if the accesses of the sequence lock during t-read may be performed directly without incurring tracking set aborts.

\vspace{1mm}\noindent\textbf{Instrumentation-optimal HyTM that is progressive only for slow-path reading transactions.}
Algorithm~\ref{alg:inswrite2} does not incur the linear instrumentation cost
on the fast-path reading transactions (inherent to Algorithm~\ref{alg:inswrite}), but provides progressiveness only
for slow-path reading transactions. 
The instrumentation cost on fast-path t-reads is avoided by using a global lock that serializes all updating slow-path transactions
during the $\TryC_k$ procedure. Fast-path transactions simply check if this lock is held without acquiring it (similar to TLE~\cite{tle}). While the per-read instrumentation overhead
is avoided, Algorithm~\ref{alg:inswrite2} still incurs the per-write instrumentation cost.

\vspace{1mm}\noindent\textbf{Sacrificing progressiveness and minimizing contention window.}
Observe that the lower bound in Theorem~\ref{th:impossibility} assumes progressiveness for both slow-path and fast-path transactions
along with opacity and invisible reads. Note that Algorithm~\ref{alg:inswrite2} retains the validation step complexity cost since it provides progressiveness for slow-path readers.

Hybrid NOrec~\cite{hybridnorec} is a HyTM implementation that does not satisfy progressiveness
(unlike its STM counterpart NOrec), but mitigates
the step-complexity cost on slow-path transactions by performing incremental validation 
during a transactional read \emph{iff} 
the shared memory has changed since the start of the transaction.
Conceptually, Hybrid NOrec uses a global sequence lock \emph{gsl} that is incremented 
at the start and end of each transaction's commit procedure.
Readers can use the value of gsl to determine whether shared memory has changed between two configurations.
Unfortunately, with this approach, two fast path transactions will always conflict on the gsl if their 
commit procedures are concurrent.
To reduce the contention window for fast path transactions, the gsl is actually implemented as two separate locks (the second one called \emph{esl}).
A slow-path transaction locks both esl and gsl while it is committing.
Instead of incrementing gsl, a fast path transaction checks if esl is locked and aborts if it is.
Then, at the end of the fast path transaction's commit procedure, 
it increments gsl twice (quickly locking and releasing it and immediately commits in hardware).
Although the window for fast path transactions to contend on gsl is small, our experiments have shown that contention on gsl has a significant impact on performance.

%
\begin{algorithm*}[!t]
\caption{Progressive fast-path and slow-path opaque HyTM implementation; code for transaction $T_k$}
\label{alg:inswrite}
\vspace{-3mm}
\noindent\lstset{style=customc}
\begin{lstlisting}[frame=none,firstnumber=1,mathescape=true]
//\textbf{Shared objects}
    v$_j$, value of each t-object X$_j$ 
    r$_{j}$, a sequence lock of each t-object X$_j$

//\textbf{Code for fast-path transactions}

read$_k(X_j)$
    ov$_j$ := v$_j$  //\label{line:lin1}
    or$_j$ := r$_j$//\medcom direct read\label{line:hread}
    if or$_j.\lit{isLocked}()$ then return $A_k$ 
    return ov$_j$

write$_k(X_j,v)$
    or$_j$ := r$_j$  //\label{line:m1}
    if or$_j.\lit{isLocked}()$ then return $A_k$
    r$_j$ := or$_j.\lit{IncSequence}()$  //\label{line:m2}
    v$_j$ := v  //\label{line:lin2} 
    return OK

tryC$_k$()
    commit-cache$_i$ // \label{line:lin3}

Function: release(Q)
    for each X$_j$ $\in Q$ do r$_j$ := or$_j.\lit{unlock}()$ // \label{line:rel1}
	
Function: acquire(Q)
    for each X$_j$ $\in Q$	
	if r$_j.\lit{tryLock}()$  // \medcom CAS/LLSC \label{line:acq1}
	    Lset(T$_k$) := Lset(T$_k$) $\cup$ {X$_j$}
	else
	    release(Lset(T$_k$))
	    return false
    return true
//%\end{lstlisting}

//\textbf{Code for slow-path transactions}

Read$_k$(X$_j$)
    if X$_j$ $\in$ Wset(T$_k$) then	return Wset(T$_k$).locate(X$_j$) 
    or$_j$ := r$_j$ //\label{line:readorec}
    ov$_j$ := v$_j$ //\label{line:read2}
    Rset(T$_k$) := Rset(T$_k$) $\cup$ {X$_j$,or$_j$} //\label{line:rset}
    if or$_j.\lit{isLocked}()$ then return $A_k$ //\label{line:abort0}	
    if not validate() then return $A_k$ //\label{line:valid}
    return ov$_j$

write$_k(X_j,v)$
    or$_j$ := r$_j$
    nv$_j$ := v
    if or$_j.\lit{isLocked}()$ then return $A_k$
    Wset(T$_k$) := Wset(T$_k$) $\cup$ {X$_j$, nv$_j$, or$_j$}
    return OK

tryC$_k$()
    if Wset(T$_k$) = $\emptyset$ then return $C_k$ // \label{line:return}
    if not acquire(Wset(T$_k$)) then return $A_k$ // \label{line:acq}
    if not validate() // \label{line:abort3}
	    release(Wset(T$_k$)) 
	    return A$_k$ 
    for each X$_j$ $\in$ Wset(T$_k$) do v$_j$:= nv$_j$ //\label{line:write}
    release(Wset(T$_k$))  // \label{line:rel}	
    return C$_k$

Function: validate()
    if $\exists$ X$_j$ $\in$ Rset(T$_k$):or$_j.\lit{getSequence()} \neq$ r$_j.\lit{getSequence()}$ then return false//\label{line:valid}
    return true

\end{lstlisting}
\vspace{-2mm}
\end{algorithm*}
\begin{algorithm*}[!t]
\caption{Opaque HyTM implementation that is progressive only for slow-path reading transactions; code for $T_k$ by process $p_i$}
\label{alg:inswrite2}
\vspace{-2mm}
\noindent\lstset{style=customc}
\begin{minipage}{0.43\textwidth}
\begin{lstlisting}[frame=none,firstnumber=1,mathescape=true]
//\textbf{Shared objects}
    L, global lock

//\textbf{Code for fast-path transactions}
start$_k$()
    if L$.\lit{isLocked()}$ then return $A_k$

read$_k$(X$_j$)
    ov$_j$ := v$_j$ 
    return ov$_j$

write$_k$(X$_j$, v)
    or$_j$ := r$_j$ 
    r$_j$ := or$_j.\lit{IncSequence}()$ 
    v$_j$ := v 
    return OK

try$C_k$()
    return commit-cache$_i$ 
\end{lstlisting}
\end{minipage}
\hspace{0.02\textwidth}
\begin{minipage}{0.54\textwidth}
\begin{lstlisting}[frame=none,firstnumber=last,mathescape=true]



//\textbf{Code for slow-path transactions}

tryC$_k$()
    if Wset(T$_k$) = $\emptyset$ then return $C_k$
    L$.\lit{Lock}()$
    if not acquire(Wset(T$_k$)) then return $A_k$
    if not validate() then
        release(Wset(T$_k$))
        return $A_k$
    for each X$_j \in$ Wset(T$_k$) do v$_j$ := nv$_j$
    release(Wset(T$_k$))
    return C$_k$
    
Function: release(Q)
    for each X$_j \in$ Q do r$_j$ := nr$_j.\lit{unlock}()$
    L$.\lit{unlock}()$; return OK
\end{lstlisting}
\end{minipage}
\vspace{-2mm}
\end{algorithm*}


%

%
\section{Evaluation}
\label{sec:eval}
In this section, we study the performance characteristics of Algorithms~\ref{alg:inswrite} and \ref{alg:inswrite2}, Hybrid NOrec, TLE and TL2.
Our experimental goals are: (G1) to study the performance impact of instrumentation on the fast-path and validation on the slow-path, 
(G2) to understand how HyTM algorithm design affects performance with Intel and IBM POWER8 HTMs, and 
(G3) to determine whether direct accesses can be used to obtain performance improvements on IBM POWER8 using the supported suspend/resume instruction to escape from a hardware transaction.

\vspace{1mm}\noindent\textbf{Experimental system (Intel).}
The experimental system is a 2-socket Intel E7-4830 v3 with 12 cores per socket and 2 hyperthreads (HTs) per core, for a total of 48 threads.
Each core has a private 32KB L1 cache and 256KB L2 cache (shared between HTs on a core).
All cores on a socket share a 30MB L3 cache.
This system has a non-uniform memory architecture (NUMA) in which threads have significantly different access costs to different parts of memory depending on which processor they are currently executing on.
The machine has 128GB of RAM, and runs Ubuntu 14.04 LTS.
All code was compiled with the GNU C++ compiler (G++) 4.8.4 with build target x86\_64-linux-gnu and compilation options \texttt{-std=c++0x -O3 -mx32}.

We pin threads so that the first socket is saturated before we place any threads on the second socket.
Thus, thread counts 1-24 run on a single socket.
Furthermore, hyperthreading is engaged on the first socket for thread counts 13-24, and on the second socket for thread counts 37-48.
Consequently, our graphs clearly show the effects of NUMA and hyperthreading.

\vspace{1mm}\noindent\textbf{Experimental system (IBM POWER8).}
The experimental system is a IBM S822L with 2x 12-core 3.02GHz processor cards, 128GB of RAM, running Ubuntu 16.04 LTS.
All code was compiled using G++ 5.3.1.
This is a dual socket machine, and each socket has two NUMA \emph{zones}.
It is expensive to access memory on a different NUMA zone, and even more expensive if the NUMA zone is on a different socket.
POWER8 uses the L2 cache for detecting tracking set aborts, and limits the size of a transaction's read- and write-set to 8KB each~\cite{htm-survey}.
This is in contrast to Intel which tracks conflicts on the entire L3 cache, and only limits a transaction's read-set to the L3 cache size, and its write-set to the L1 cache size.

We pin one thread on each core within a NUMA zone before moving to the next zone.
We remark that unlike the thread pinning policy for Intel which saturated the first socket before moving to the next, this proved to be the best policy
for POWER8 which experiences severe negative scaling when threads are saturated on a single 8-way hardware multi-threaded core.
This is because all threads on a core share resources, including the L1 and L2 cache, a single branch execution pipeline, 
and only two load-store pipelines.

\vspace{1mm}\noindent\textbf{Hybrid TM implementations.}
For TL2, we used the implementation published by its authors.
We implemented the other algorithms in C++.
Each hybrid TM algorithm first attempts to execute a transaction on the fast-path, and will continue to execute on the fast-path until the transaction has experienced 20 aborts, at which point it will fall back to the slow-path.
We implemented Algorithm~\ref{alg:inswrite} on POWER8 where each read of a sequence lock during a transactional read operation was enclosed within a pair of suspend/resume instructions to access them without 
incurring tracking set aborts (Algorithm~\ref{alg:inswrite}\textsuperscript{$\ast$}). We remark that this does not affect the opacity of the implementation. 
We also implemented the variant of Hybrid NOrec (Hybrid NOrec\textsuperscript{$\ast$}) in which the update to gsl is performed using a fetch-increment primitive between suspend/resume instructions, as is recommended in~\cite{hynorecriegel}.

In each algorithm, instead of placing a lock next to each address in memory, we allocated a global array of one million locks, and used a simple hash function to map each address to one of these locks.
This avoids the problem of having to change a program's memory layout to incorporate locks, and greatly reduces the amount of memory needed to store locks, at the cost of some possible false conflicts since many addresses map to each lock.
Note that the exact same approach was taken by the authors of TL2.

We chose \textit{not} to compile the hybrid TMs as separate libraries, since invoking library functions for each read and write can cause algorithms to incur enormous overhead.
Instead, we compiled each hybrid TM directly into the code that uses it.

\vspace{1mm}\noindent\textbf{Experimental methodology.}
We used a simple unbalanced binary search tree (BST) microbenchmark as a vehicle to study the performance of our implementations.
The BST implements a dictionary, which contains a set of keys, each with an associated value.
For each TM algorithm 
and update rate $U \in \{40, 10, 0\}$, we run six timed \textit{trials} for several thread counts $n$.
Each trial proceeds in two phases: \textit{prefilling} and \textit{measuring}.
In the prefilling phase, $n$ concurrent threads perform 50\% \textit{Insert} and 50\% \textit{Delete} operations on keys drawn uniformly randomly from $[0, 10^5)$ until the size of the tree converges to a steady state (containing approximately $10^5/2$ keys).
Next, the trial enters the measuring phase, during which threads begin counting how many operations they perform.
In this phase, each thread performs $(U/2)$\% \textit{Insert}, $(U/2)$\% \textit{Delete} and $(100-U)$\% \textit{Search} operations, on keys/values drawn uniformly from $[0,10^5)$, for one second.

Uniformly random updates to an unbalanced BST have been proven to yield trees of logarithmic height with high probability.
Thus, in this type of workload, almost all transactions succeed in hardware, and the slow-path is almost never used.
To study performance when transactions regularly run on slow-path, we introduced an operation called a \textit{RangeIncrement} that often fails in hardware and must run on the slow-path.
A \textit{RangeIncrement}$(low, hi)$ atomically increments the values 
associated with each key in the range $[low, hi]$ present in the tree.
Note that a \textit{RangeIncrement} is more likely to experience data 
conflicts and capacity aborts than BST updates, which only modify a single node.

We consider two types of workloads: (W1) all $n$ threads perform \textit{Insert}, \textit{Delete} and \textit{Search}, and (W2) $n-1$ threads perform \textit{Insert}, \textit{Delete} and \textit{Search} and one thread performs only \textit{RangeIncrement} operations.
Figure~\ref{fig-exp-bst} shows the results for both types of workloads.


\begin{figure}
    \centering
    \setlength\tabcolsep{0pt}
\begin{minipage}{0.495\linewidth}
    \centering
    \textbf{2x12-core Intel E7-4830v3}
    \begin{tabular}{m{0.04\linewidth}m{0.48\linewidth}m{0.48\linewidth}}
        &
        \fcolorbox{black!50}{black!20}{\parbox{\dimexpr \linewidth-2\fboxsep-2\fboxrule}{\centering {\footnotesize No threads perform \textit{RangeIncrement} (W1)}}} &
        \fcolorbox{black!50}{black!20}{\parbox{\dimexpr \linewidth-2\fboxsep-2\fboxrule}{\centering {\footnotesize One thread performs \textit{RangeIncrement} (W2)}}}
        \\
        \rotatebox{90}{\small 0\% updates} &
        \includegraphics[width=\linewidth]{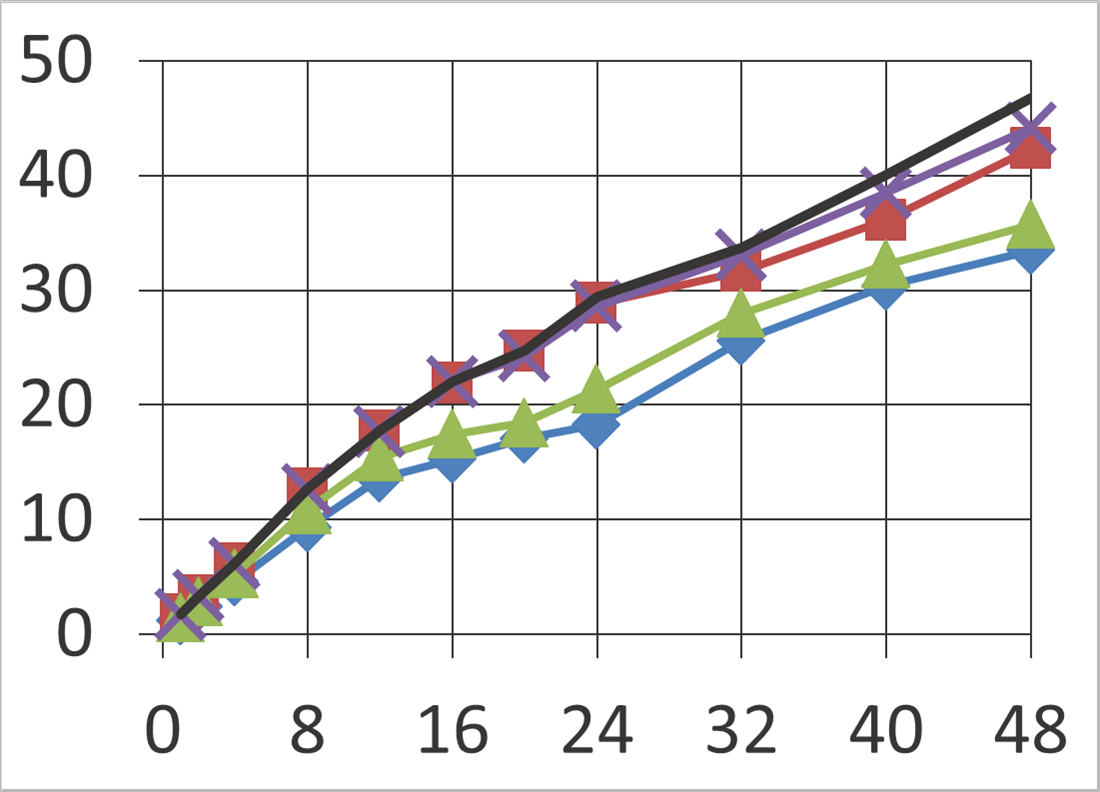} &
        \includegraphics[width=\linewidth]{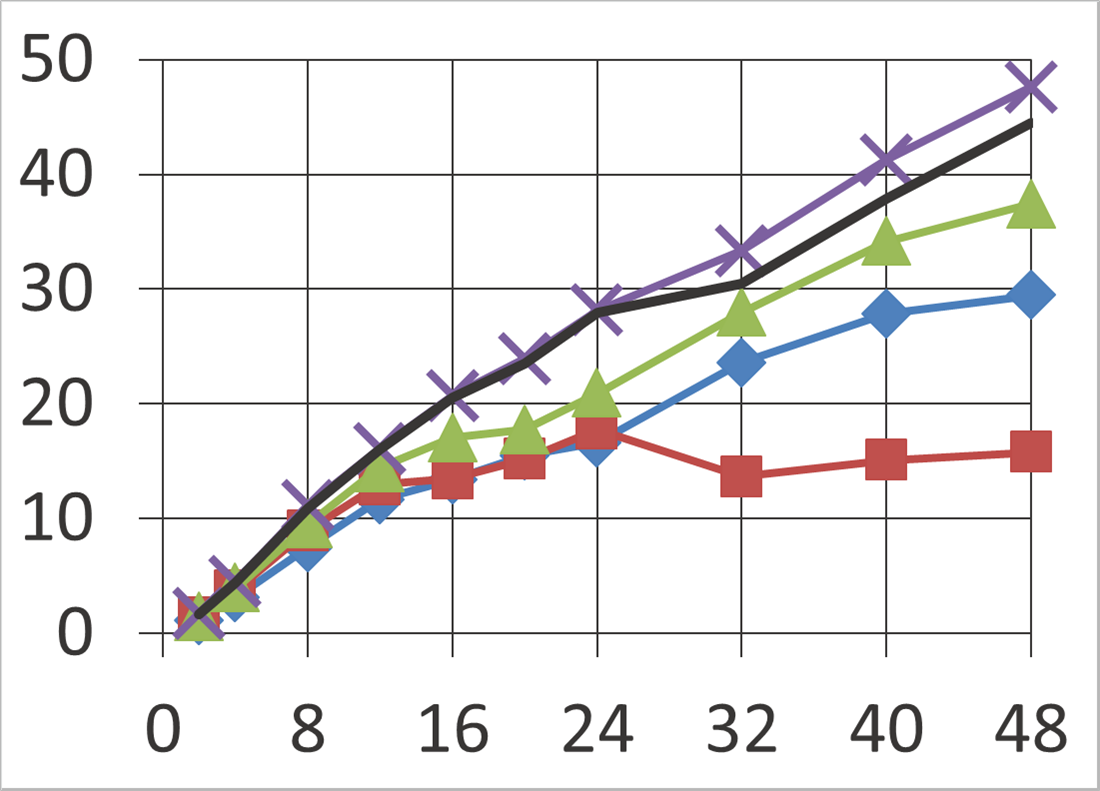}
        \\
        \vspace{-5mm}\rotatebox{90}{\small 10\% updates} &
        \vspace{-5mm}\includegraphics[width=\linewidth]{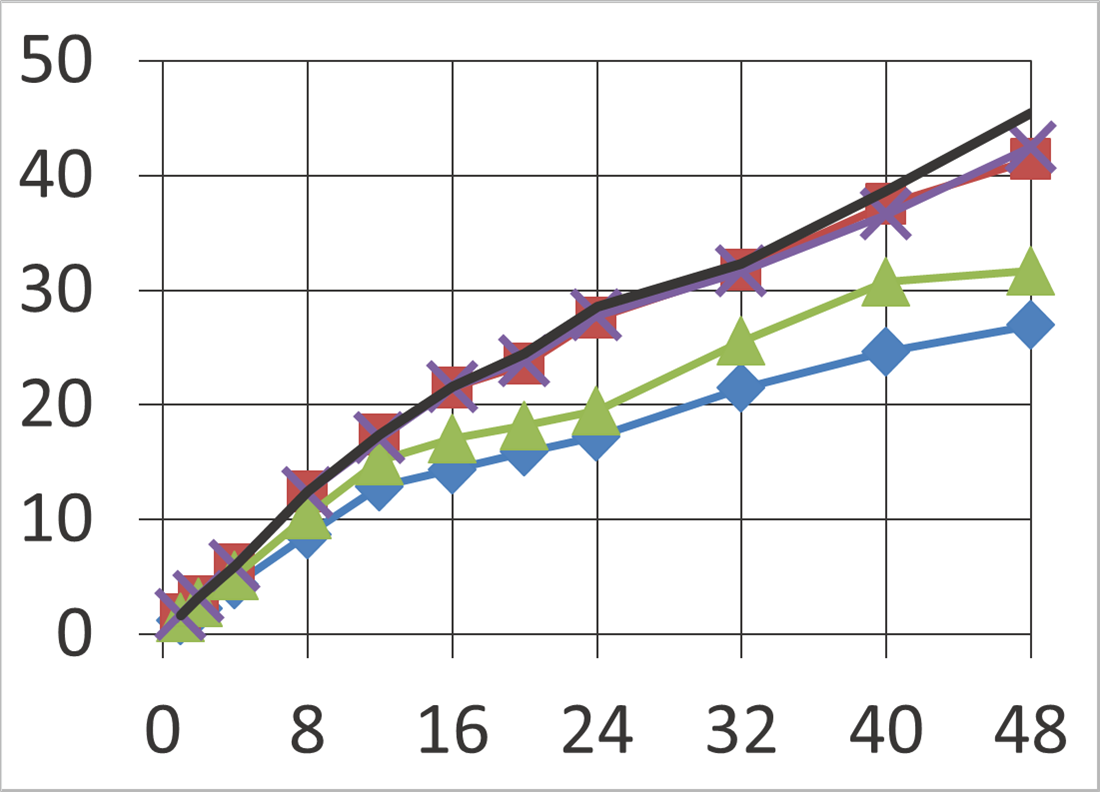} &
        \vspace{-5mm}\includegraphics[width=\linewidth]{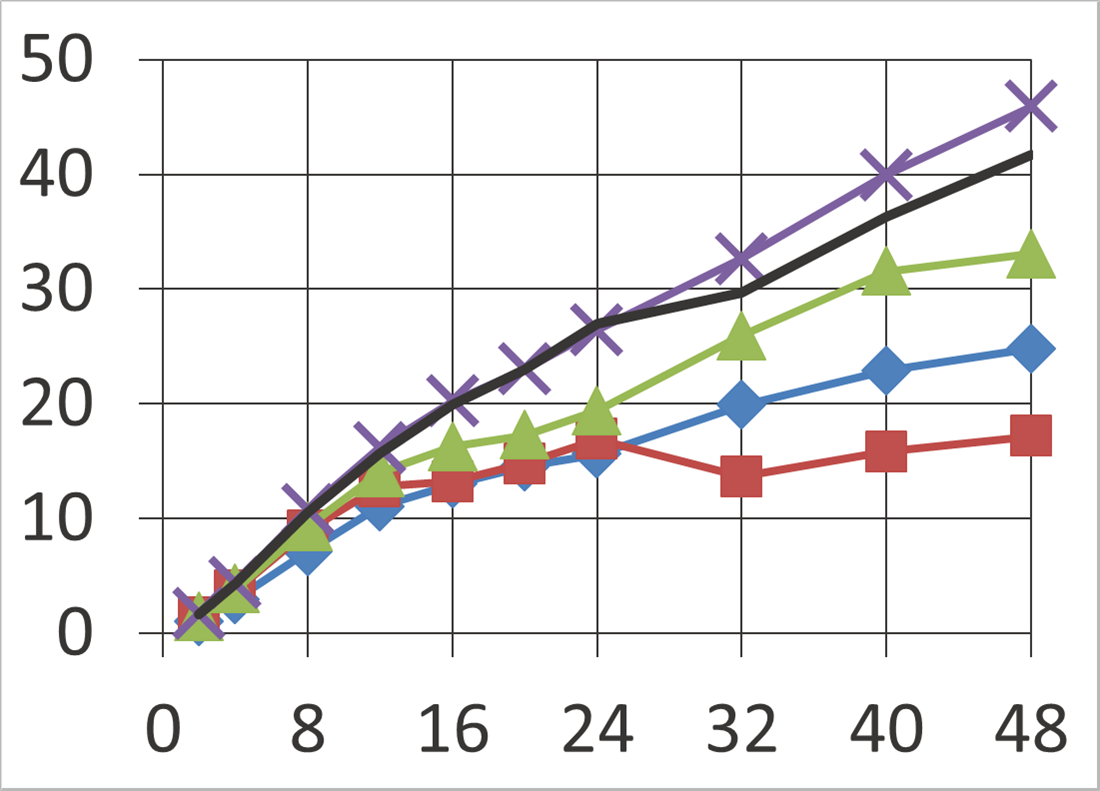}
        \\
        \vspace{-5mm}\rotatebox{90}{\small 40\% updates} &
        \vspace{-5mm}\includegraphics[width=\linewidth]{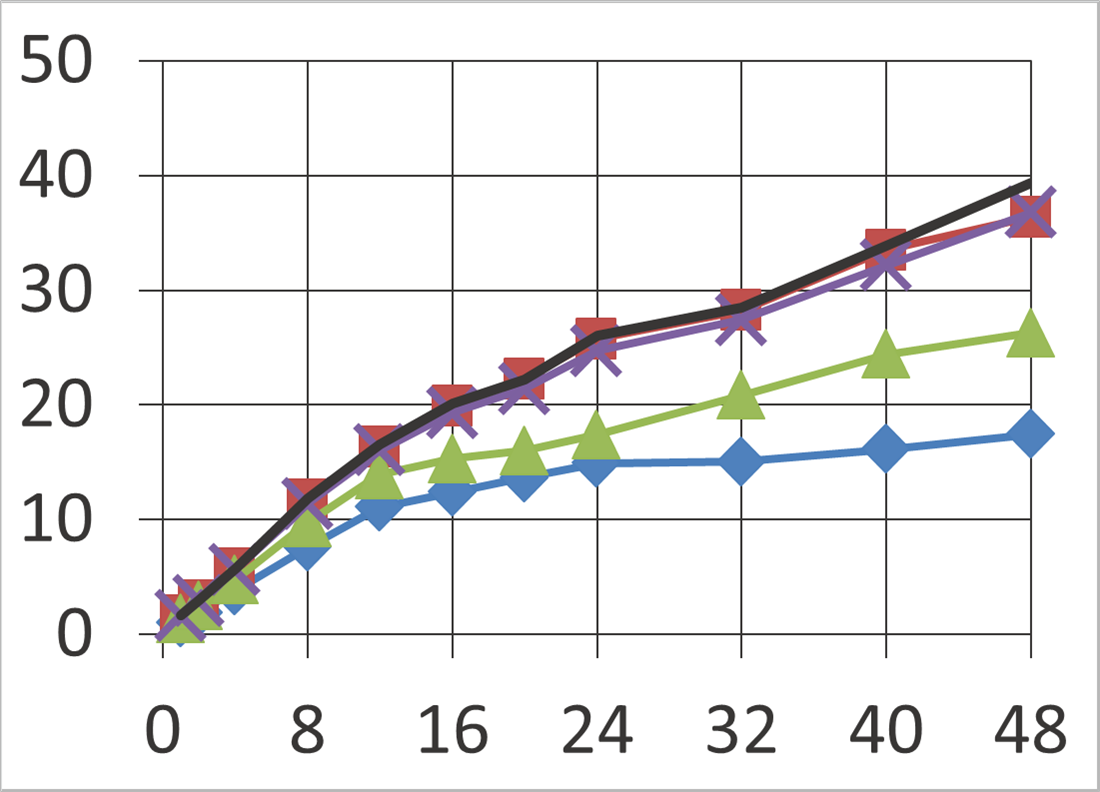} &
        \vspace{-5mm}\includegraphics[width=\linewidth]{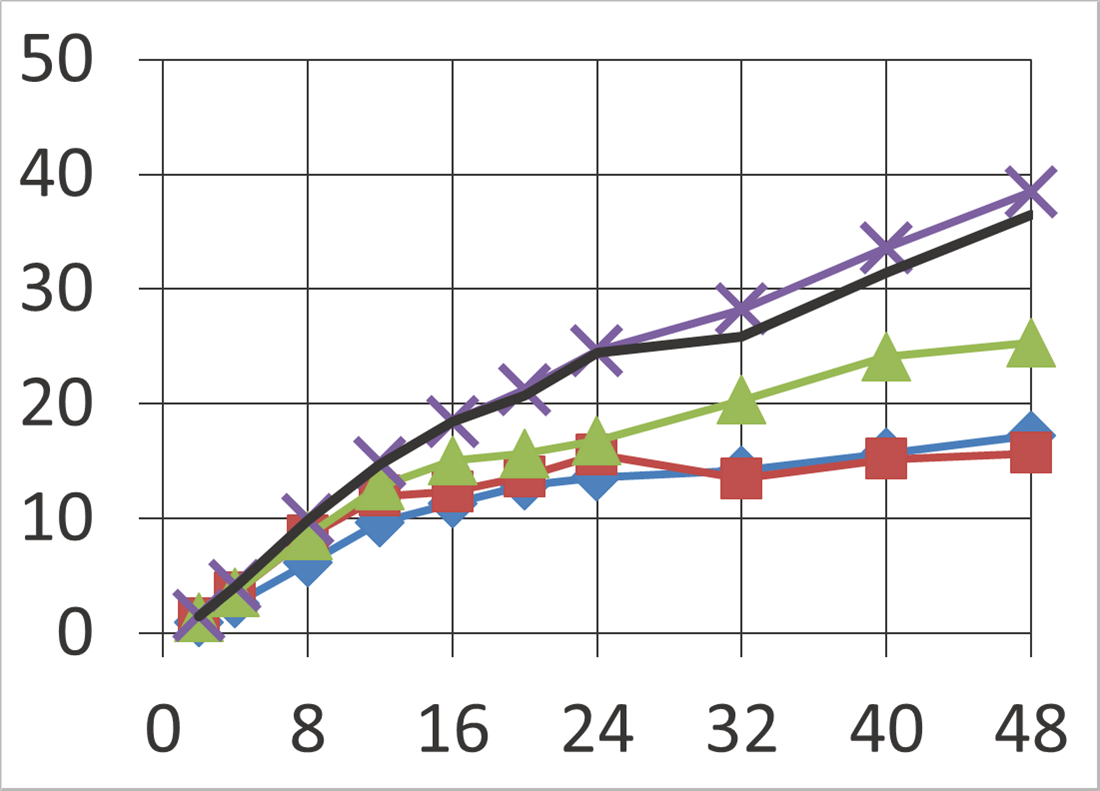}
        \\
    \end{tabular}
\end{minipage}
\begin{minipage}{0.495\linewidth}
    \centering
    \textbf{2x12-core IBM POWER8}
    \begin{tabular}{m{0.04\linewidth}m{0.48\linewidth}m{0.48\linewidth}}
        &
        \fcolorbox{black!50}{black!20}{\parbox{\dimexpr \linewidth-2\fboxsep-2\fboxrule}{\centering {\footnotesize No threads perform \textit{RangeIncrement} (W1)}}} &
        \fcolorbox{black!50}{black!20}{\parbox{\dimexpr \linewidth-2\fboxsep-2\fboxrule}{\centering {\footnotesize One thread performs \textit{RangeIncrement} (W2)}}}
        \\
        \rotatebox{90}{\small 0\% updates} &
        \includegraphics[width=\linewidth]{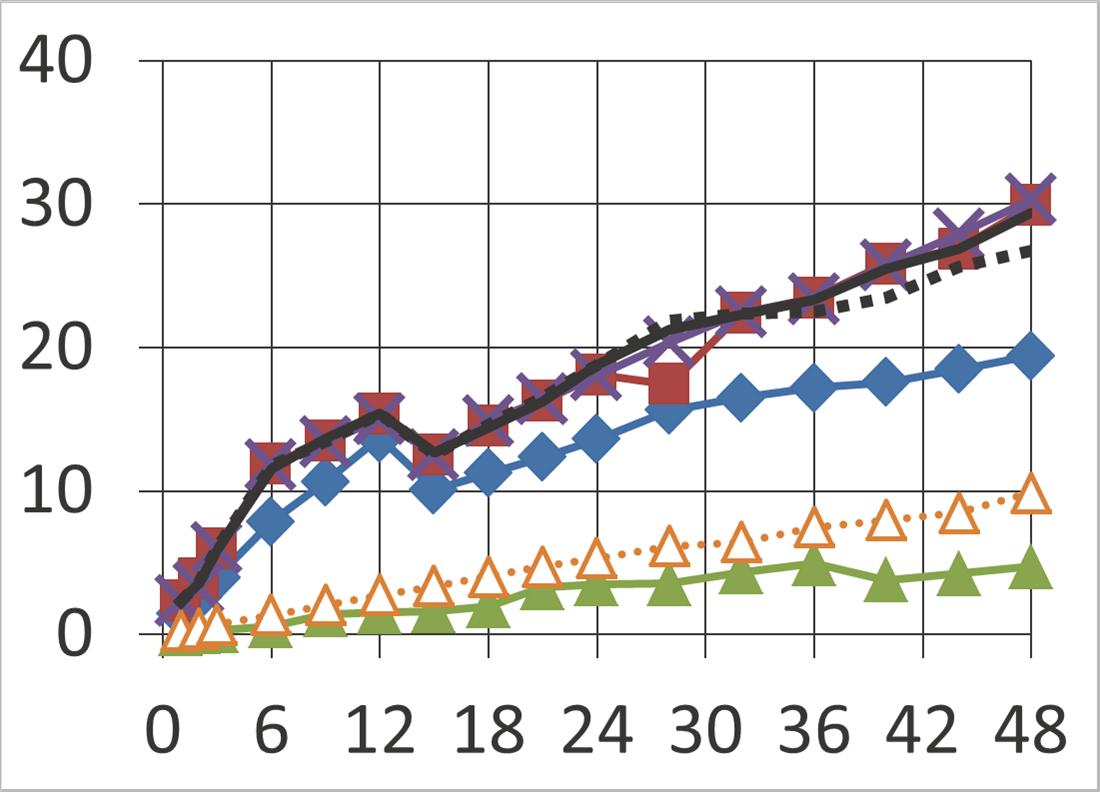} &
        \includegraphics[width=\linewidth]{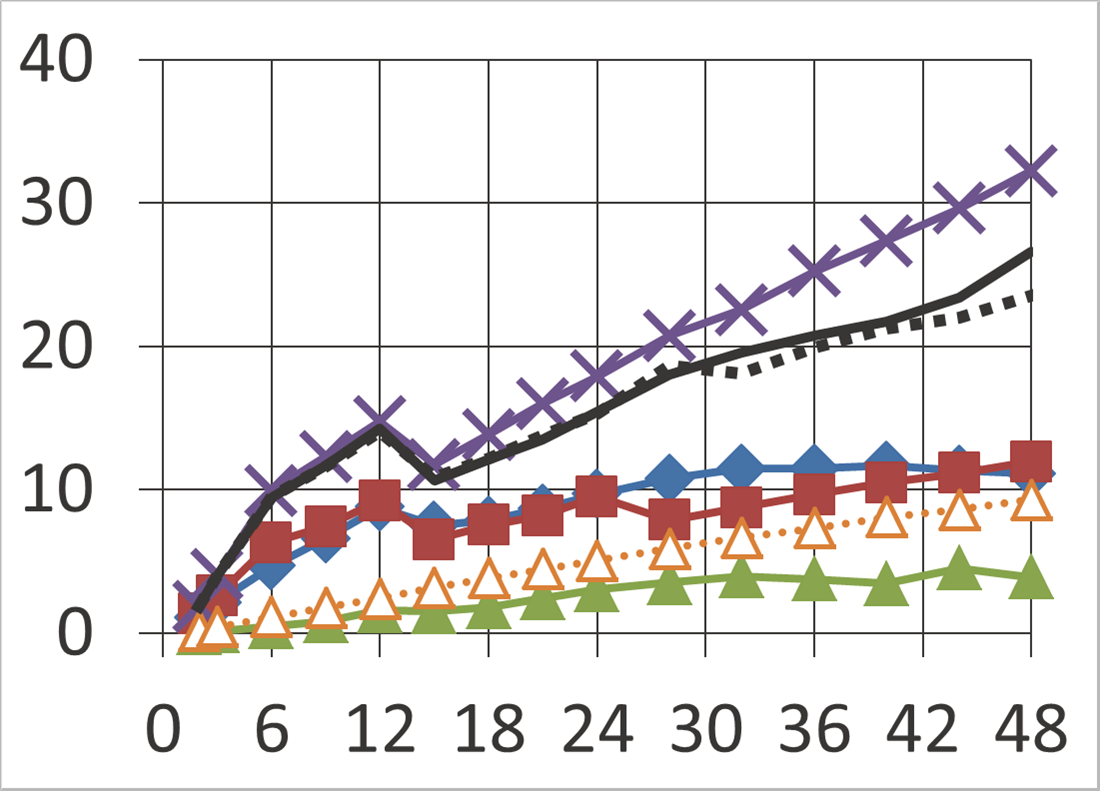}
        \\
        \vspace{-5mm}\rotatebox{90}{\small 10\% updates} &
        \vspace{-5mm}\includegraphics[width=\linewidth]{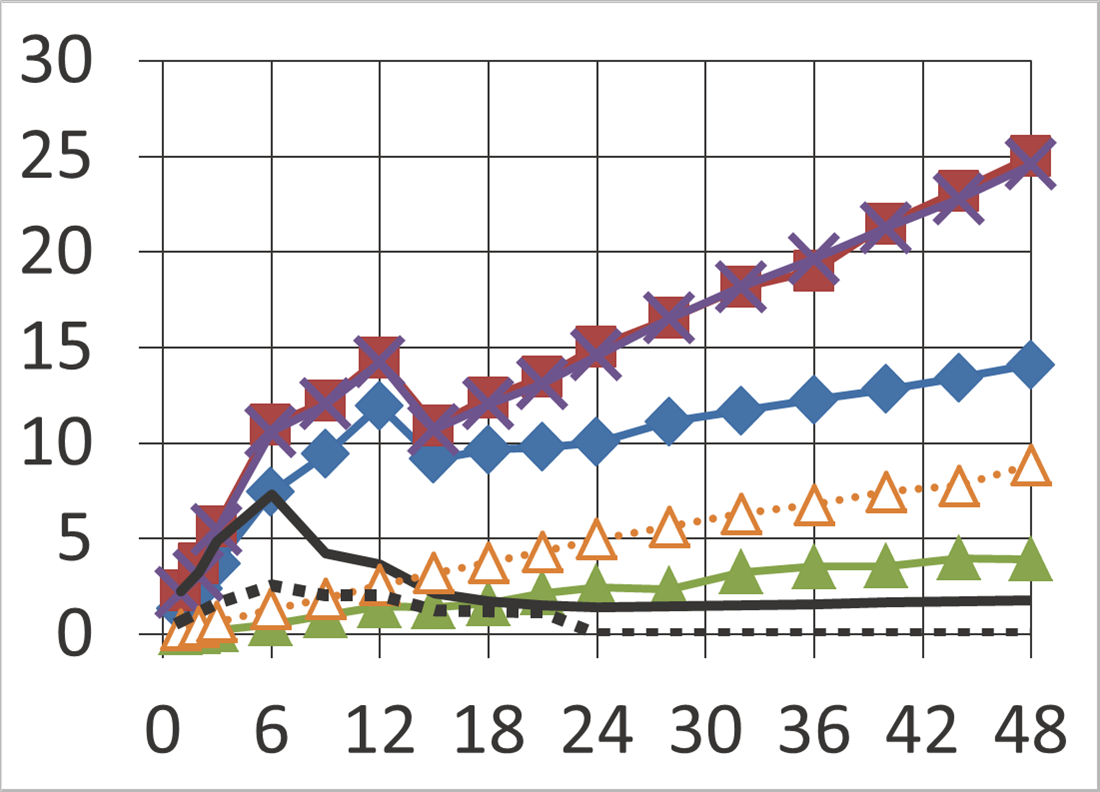} &
        \vspace{-5mm}\includegraphics[width=\linewidth]{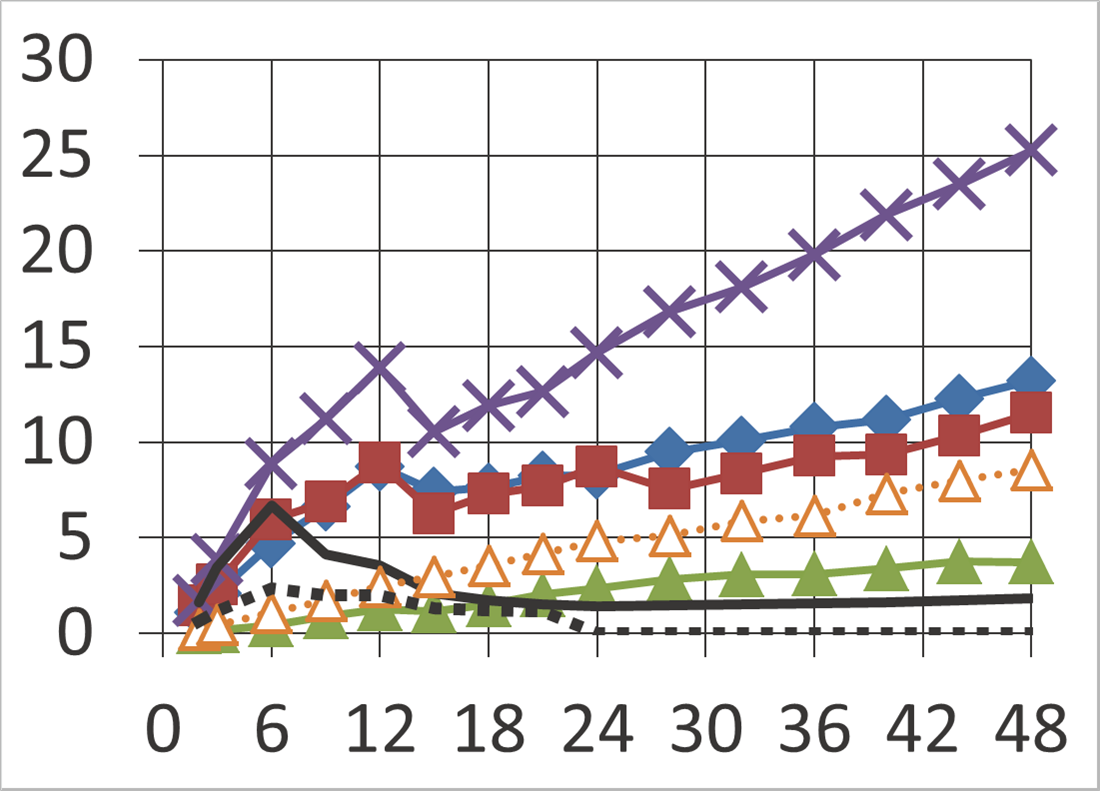}
        \\
        \vspace{-5mm}\rotatebox{90}{\small 40\% updates} &
        \vspace{-5mm}\includegraphics[width=\linewidth]{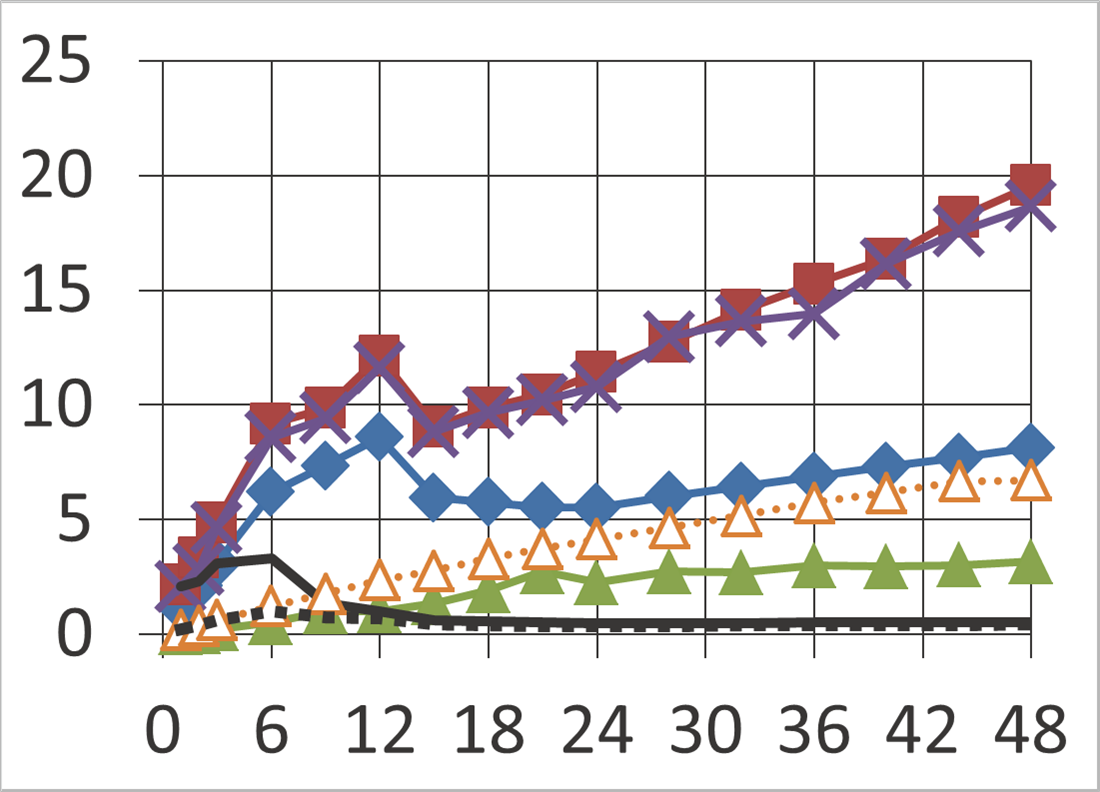} &
        \vspace{-5mm}\includegraphics[width=\linewidth]{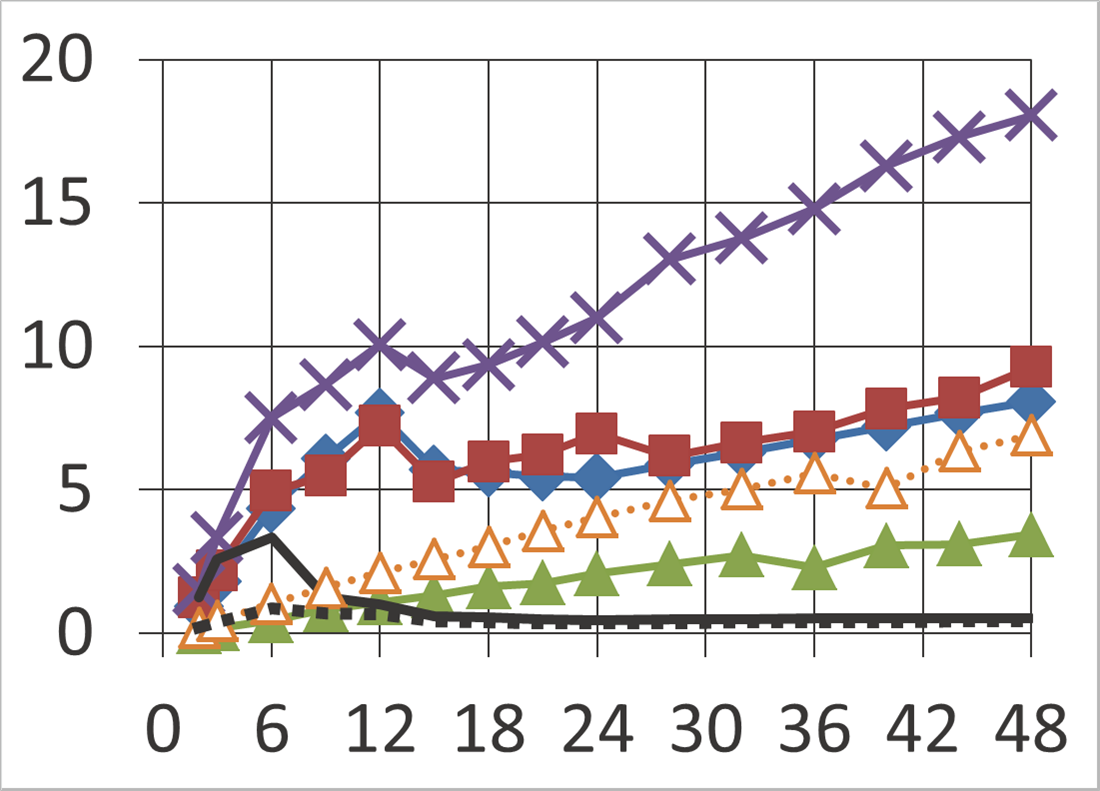}
        \\
    \end{tabular}
\end{minipage}
    \vspace{-2mm}
	\includegraphics[width=\linewidth]{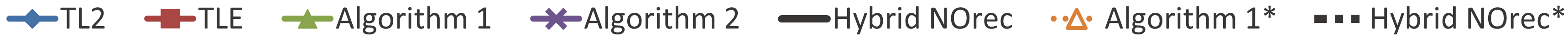}
    \vspace{-4mm}
\caption{Results for a \textbf{BST microbenchmark}.
The x-axis represents the number of concurrent threads.
The y-axis represents operations per microsecond.}
\label{fig-exp-bst}
\end{figure}

\vspace{1mm}\noindent\textbf{Results (Intel).}
We first discuss the 0\% updates graph for workload type W1.
In this graph, essentially all operations committed in hardware.
In fact, in each trial, a small fraction of 1\% of operations ran on the slow-path.
Thus, any performance differences shown in the graph are essentially differences in the performance of the algorithms' respective fast-paths (with the exception of TL2).
Algorithm~\ref{alg:inswrite}, which has instrumentation in its fast-path read operations, has significantly lower performance than Algorithm~\ref{alg:inswrite2}, which does not.
Since this is a read-only workload, this instrumentation is responsible for the performance difference.

In the W1 workloads, TLE, Algorithm~\ref{alg:inswrite2} and Hybrid NOrec perform similarly (with a small performance advantage for Hybrid NOrec at high thread counts).
This is because the fast-paths for these three algorithms have similar amounts of instrumentation: there is no instrumentation for reads or writes, 
and the transaction itself incurs one or two metadata accesses.
In contrast, in the W2 workloads, TLE performs quite poorly, compared to the HyTM algorithms.
In these workloads, transactions must periodically run on the slow-path, and in TLE, 
this entails acquiring a global lock that restricts progress for all other threads.
At high thread counts this significantly impacts performance.
Its performance decreases as the sizes of the ranges passed to \textit{RangeIncrement} increase.
Its performance is also negatively impacted by NUMA effects at thread counts higher than 24.
(This is because, when a thread $p$ reads the lock and incurs a cache miss, 
if the lock was last held by another thread on the same socket, 
then $p$ can fill the cache miss by loading it from the shared L3 cache.
However, if the lock was last held by a thread on a different socket, 
then $p$ must read the lock state from main memory, which is significantly more expensive.)
On the other hand, in each graph in the W2 workloads, the performance of each HyTM (and TL2) is similar to its performance in the corresponding W1 workload graph.
For Algorithm~\ref{alg:inswrite} (and TL2), this is because of progressiveness.
Although Algorithm~\ref{alg:inswrite2} is not truly progressive, fast-path transactions will abort only if they are concurrent with the commit procedure of a slow-path transaction.
In \textit{RangeIncrement} operations, there is a long read-only prefix (which is exceptionally long because of Algorithm~\ref{alg:inswrite2}'s quadratic validation) followed by a relatively small set of writes.
Thus, \textit{RangeIncrement} operations have relatively little impact on the fast-path.
The explanation is similar for Hybrid NOrec (except that it performs less validation than Algorithm~\ref{alg:inswrite2}).

Observe that the performance of Hybrid NOrec decreases slightly, relative to Algorithm~\ref{alg:inswrite2}, after 24 threads.
Recall that, in Hybrid NOrec, the global sequence number is a single point of contention on the fast-path.
(In Algorithm~\ref{alg:inswrite2}, the global lock is only modified by slow-path transactions, so fast-path transactions do not have a single point of contention.)
We believe this is due to NUMA effects, similar to those described in~\cite{BKLL16}.
Specifically, whenever a threads on the first socket performs a fast-path transaction that commits and modifies the global lock, it causes cache invalidations for all other threads.
Threads on socket two must then load the lock state from main memory, which takes much longer than loading it from the shared L3 cache.
This lengthens the transaction's window of contention, making it more likely to abort.
(In the 0\% updates graph in the W2 workload, we still see this effect, because there is a thread performing \textit{RangeIncrement} operations.)

\vspace{1mm}\noindent\textbf{Results (IBM POWER8).}
Algorithm~\ref{alg:inswrite} performs poorly on POWER8: POWER8 transactions can only load 64 cache lines before they will abort~\cite{nguyen-thesis}. 
Transactions read locks and tree nodes, which are in different cache lines: together, they often exceed 64 cache lines loaded in a tree operation, 
so most transactions cannot succeed in hardware. Consequently, on POWER8, 
it is incredibly important either to have minimal instrumentation in transactions, or for metadata to be located in the 
same cache lines as program data. Of course, the latter is not possible for HyTMs, which do not have control over the layout of program data.
Consequently, Algorithm~\ref{alg:inswrite2} outperforms Algorithm~\ref{alg:inswrite} in POWER8 quite easily by avoiding the per-read instrumentation. 

Algorithm~\ref{alg:inswrite} is improved slightly by the expensive (on POWER8) suspend/resume on sequence locks during transactional reads, but it still performs relatively poorly. 
To make suspend/resume a practical tool, one could imagine attempting to 
collect several metadata accesses and perform them together to amortize the cost of a suspend/resume pair. For instance, 
in Algorithm~\ref{alg:inswrite}, one might try to update the locks for all of the transactional writes at once, when the transaction commits. 
Typically one would accomplish this by logging all writes so that a process can remember which addresses it must lock at commit time. 
However, logging the writes inside the transaction would be at least as costly as just performing them.

Observe that Hybrid NOrec does far worse with updates in POWER8 than on the Intel machine.
This is due to the fact that fetch-increment on a single location experiences severe negative scaling on the POWER8 processor: e.g., in one second, a single
thread can perform 37 fetch-add operations while 6 threads perform a total of 9 million and 24 threads perform only 4 million fetch-add operations.
In contrast, the Intel machine performs 32 million operations with 6 threads and 45 million with 24 threads. This is likely because this Intel processor provides 
fetch-add instructions while it must be emulated on the POWER8 processor.

In Hybrid NOrec\textsuperscript{$\ast$}, the non-speculative increment of gsl actually makes performance worse. Recall that in Hybrid NOrec, 
if a fast-path transaction $T_1$ increments gsl, and then a software transaction $T_2$ reads gsl (as part of validation) before $T_1$ commits, then $T_1$ will abort, 
and $T_2$ will not see $T_1$'s change to gsl. 
So, $T_2$ will have a higher chance of avoiding incremental validation (and, hence, will likely take less time to run, and have a smaller contention window).
However, in Hybrid NOrec\textsuperscript{$\ast$}, once $T_1$ increments gsl, $T_2$ will see the change to gsl, regardless of whether $T_1$ commits or aborts. Thus, 
$T_2$ will be forced to perform incremental validation. In our experiments, we observed that a much larger number of transactions ran on 
the fallback path in Hybrid NOrec\textsuperscript{$\ast$} than in Hybrid NOrec (often several orders of magnitude more).

%
\section{Related work and discussion}
\label{sec:rel}
\vspace{1mm}\noindent\textbf{HyTM implementations and complexity.}
Early HyTMs like the ones described in \cite{damronhytm, kumarhytm} provided progressiveness, but
subsequent HyTM proposals like PhTM~\cite{phasedtm} and HybridNOrec~\cite{hybridnorec} sacrificed progressiveness for lesser instrumentation overheads.
However, the clear trade-off in terms of concurrency vs. instrumentation for these HyTMs have not been studied in the context of currently available HTM
architectures. This instrumentation cost on the fast-path was precisely characterized in \cite{hytm14disc}.
In this paper, we proved the inherent cost of concurrency on the slow-path thus establishing a surprising, 
but intuitive complexity separation between progressive STMs and HyTMs.
Moreover, to the best of our knowledge, this is the first work to consider the theoretical foundations of the cost of concurrency in 
HyTMs in theory and practice (on currently available HTM architectures).
Proof of Theorem~\ref{th:impossibility} is based on the analogous proof for step complexity of STMs that are \emph{disjoint-access parallel}~\cite{prog15-pact, tm-book}.
Our implementation of Hybrid NOrec follows \cite{hynorecriegel}, which additionally proposed the use of direct accesses
in fast-path transactions to reduce instrumentation overhead in the AMD Advanced Synchronization Facility (ASF) architecture.

\vspace{1mm}\noindent\textbf{Beyond the two path HyTM approach.}
\vspace{1mm}\noindent\textit{Employing an uninstrumented fast fast-path.}
We now describe how every transaction may first be executed in a ``fast'' fast-path with almost no instrumentation
and if unsuccessful, may be re-attempted in the fast-path and subsequently in slow-path.
Specifically, we transform any opaque HyTM $\mathcal{M}$ to an opaque
HyTM $\mathcal{M}'$ in which a shared \emph{fetch-and-add} metadata base object $F$ that slow-path updating transactions
increment (and resp. decrement) at the start (and resp. end). In $\mathcal{M}'$, a ``fast'' fast-path transaction checks first if $F$ is $0$
and if not, aborts the transaction; otherwise the transaction is continued as an uninstrumented hardware transaction.
The code for the fast-path and the slow-path is identical to $\mathcal{M}$.

\vspace{1mm}\noindent\textit{Other approaches.}
Recent work has investigated fallback to \emph{reduced} hardware transactions~\cite{MS13}
in which an all-software slow-path is augmented using a slightly faster slow-path 
that is optimistically used to avoid running some transactions on the true software-only slow-path.
Amalgamated lock elision (ALE) was proposed in \cite{ale15} which improves over TLE
by executing the slow-path as a series of segments, each of which is a dynamic length hardware transaction.
Invyswell~\cite{Calciu14} is a HyTM design with multiple hardware and software modes of execution that gives flexibility to avoid instrumentation overhead in uncontended executions.

We remark that such multi-path approaches may be easily applied to each of the Algorithms proposed in this paper. However, 
in the search for an efficient HyTM, it is important to strike the fine balance between concurrency, hardware instrumentation and software validation cost.
Our lower bound, experimental methodology and evaluation of HyTMs provides the first clear characterization of these trade-offs in both Intel and POWER8 architectures. 
\bibliography{references2}
\newpage
\appendix
%
\section{Proof of opacity for algorithms}
\label{app:opacity}
We will prove the opacity of Algorithm~\ref{alg:inswrite} even if some of accesses performed by fast-path transactions are direct (as indicated in the pseudocode).
Analogous arguments apply to Algorithm~\ref{alg:inswrite2}.

Let $E$ by any execution of Algorithm~\ref{alg:inswrite}. 
Since opacity is a safety property, it is sufficient to prove that every finite execution is opaque~\cite{icdcs-opacity}.
Let $<_E$ denote a total-order on events in $E$.

Let $H$ denote a subsequence of $E$ constructed by selecting
\emph{linearization points} of t-operations performed in $E$.
The linearization point of a t-operation $op$, denoted as $\ell_{op}$ is associated with  
a base object event or an event performed during 
the execution of $op$ using the following procedure. 

\vspace{1mm}\noindent\textbf{Completions.}
First, we obtain a completion of $E$ by removing some pending
invocations or adding responses to the remaining pending invocations.
Incomplete $\Read_k$, $\Write_k$ operation performed by a slow-path transaction $T_k$ is removed from $E$;
an incomplete $\TryC_k$ is removed from $E$ if $T_k$ has not performed any write to a base object $r_j$; $X_j \in \Wset(T_k)$
in Line~\ref{line:write}, otherwise it is completed by including $C_k$ after $E$.
Every incomplete $\Read_k$, $\TryA_k$, $\Write_k$ and $\TryC_k$ performed by a fast-path transaction $T_k$ is removed from $E$.

\vspace{1mm}\noindent\textbf{Linearization points.}
Now a linearization $H$ of $E$ is obtained by associating linearization points to
t-operations in the obtained completion of $E$.
For all t-operations performed a slow-path transaction $T_k$, linearization points as assigned as follows:
\begin{itemize}
\item For every t-read $op_k$ that returns a non-A$_k$ value, $\ell_{op_k}$ is chosen as the event in Line~\ref{line:read2}
of Algorithm~\ref{alg:inswrite}, else, $\ell_{op_k}$ is chosen as invocation event of $op_k$
\item For every $op_k=\Write_k$ that returns, $\ell_{op_k}$ is chosen as the invocation event of $op_k$
\item For every $op_k=\TryC_k$ that returns $C_k$ such that $\Wset(T_k)
  \neq \emptyset$, $\ell_{op_k}$ is associated with the first write to a base object performed by $\lit{release}$
  when invoked in Line~\ref{line:rel}, 
  else if $op_k$ returns $A_k$, $\ell_{op_k}$ is associated with the invocation event of $op_k$
\item For every $op_k=\TryC_k$ that returns $C_k$ such that $\Wset(T_k) = \emptyset$, 
$\ell_{op_k}$ is associated with Line~\ref{line:return}
\end{itemize}
For all t-operations performed a fast-path transaction $T_k$, linearization points are assigned as follows:
\begin{itemize}
\item For every t-read $op_k$ that returns a non-A$_k$ value, $\ell_{op_k}$ is chosen as the event in Line~\ref{line:lin1}
of Algorithm~\ref{alg:inswrite}, else, $\ell_{op_k}$ is chosen as invocation event of $op_k$
\item
For every $op_k$ that is a $\TryC_k$, $\ell_{op_k}$ is the $\ms{commit-cache}_k$ primitive invoked by $T_k$
\item
For every $op_k$ that is a $\Write_k$, $\ell_{op_k}$ is the event in Line~\ref{line:lin2}.
\end{itemize}
$<_H$ denotes a total-order on t-operations in the complete sequential history $H$.

\vspace{1mm}\noindent\textbf{Serialization points.}
The serialization of a transaction $T_j$, denoted as $\delta_{T_j}$ is
associated with the linearization point of a t-operation 
performed by the transaction.

We obtain a t-complete history ${\bar H}$ from $H$ as follows. 
A serialization $S$ is obtained by associating serialization points to transactions in ${\bar H}$ as follows:
for every transaction $T_k$ in $H$ that is complete, but not t-complete, 
we insert $\textit{tryC}_k\cdot A_k$ immediately 
after the last event of $T_k$ in $H$. 
If $T_k$ is an updating transaction that commits, then $\delta_{T_k}$ is $\ell_{\TryC_k}$.
If $T_k$ is a read-only or aborted transaction,
then $\delta_{T_k}$ is assigned to the linearization point of the last t-read that returned a non-A$_k$ value in $T_k$.

$<_S$ denotes a total-order on transactions in the t-sequential history $S$.
Since for a given transaction, its
serialization point is chosen between the first and last event of the transaction,
if $T_i \prec_{H} T_j$, then $\delta_{T_i} <_{E} \delta_{T_j}$ implies $T_i <_S T_j$.

Throughout this proof, we consider that process $p_i$ executing fast-path transaction $T_k \in \ms{txns}(E)$
does not include the sequence lock $r_j$ in the tracking set of $p_i$ when accessed in Line~\ref{line:hread}
during $\Read_k(X_j)$.
\begin{claim}
\label{cl:fast}
If every transaction $T_k \in \ms{txns}(E)$ is fast-path, then $S$ is legal.
\end{claim}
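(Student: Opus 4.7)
The plan is to show, for every non-abort t-read $op_k = \Read_k(X_j) \to v$ appearing in $S$ with linearization point $\ell := \ell_{op_k}$ assigned to the cached access at Line~\ref{line:lin1}, that $v$ agrees with the latest preceding write to $X_j$ in $S$ (either by $T_k$ itself, or by the most recent committed transaction $T_m$ serialized before $T_k$, or the initial value if no such $T_m$ exists). I would split on whether $X_j \in \Wset(T_k)$ at $\ell$. If yes, then an earlier $\Write_k(X_j, v')$ has already performed a cached access at Line~\ref{line:lin2} that placed $(v_j, v', \exclusive)$ into $\tau_k$; since all events of $T_k$ up to $\ell$ returned non-$\bot$, the tracking set is still valid at $\ell$, so the cached read returns $v'$ directly from $\tau_k$. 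Its linearization at Line~\ref{line:lin2} precedes $\ell$ in $T_k$'s program order, and legality within $T_k$ is immediate.

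If $X_j \notin \Wset(T_k)$ at $\ell$, then the cached read returns the current shared-memory value of $v_j$; this value $v$ was last written either at initialization or by the commit-cache of some fast-path $T_m$ with $X_j \in \Wset(T_m)$, which propagated the final exclusive entry $(v_j, v, \exclusive) \in \tau_m$ placed there by the last $\Write_m(X_j, \cdot)$ at Line~\ref{line:lin2}. The serialization point $\delta_{T_m}$ is exactly this commit-cache, which strictly precedes $\ell$; since $\ell \leq \delta_{T_k}$ under every subcase of the definition of $\delta_{T_k}$ (be it the commit-cache of $T_k$ when $T_k$ is updating, or the linearization of $T_k$'s last non-$A_k$ read otherwise), we obtain $T_m <_S T_k$.

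To finish, I would rule out an intervening committed updating $T_l$ with $X_j \in \Wset(T_l)$ and $T_m <_S T_l <_S T_k$. Such a $T_l$ has $\delta_{T_l}$ equal to its commit-cache, strictly between $\delta_{T_m}$ and $\delta_{T_k}$, and this event writes $v_j$. If $\delta_{T_l} \leq \ell$, it contradicts the choice of $T_m$ as the most recent modifier of $v_j$ prior to $\ell$. If $\delta_{T_l} > \ell$, then $T_l$ applies a nontrivial primitive to $v_j$ while $(v_j, v, \shared) \in \tau_k$ and before the last event of $T_k$ at which $\tau_k$ must still be valid (its commit-cache, or at least its last non-$A_k$ read, both of which lie at or after $\delta_{T_k}$), so by Remark~\ref{re:traborts} the next event of $T_k$ is $A_k$---contradicting either that $T_k$ is committed or that $\delta_{T_k}$ is witnessed by a later non-$A_k$ read. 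The subcase where no committed writer of $X_j$ precedes $T_k$ in $S$ (so $v$ must be the initial value) is handled by the same argument applied to the hypothetically earliest serialized writer.

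The main obstacle I anticipate is handling the asymmetric definition of $\delta_{T_k}$ across committed updating, committed read-only, and aborted transactions; what rescues the argument uniformly is that in each variant the interval over which $\tau_k$ is guaranteed to remain valid strictly extends past $\ell$ up to a point no earlier than $\delta_{T_k}$, so any concurrent $T_l$ whose commit-cache falls in $(\ell, \delta_{T_k})$ triggers the tracking-set abort invariant of Remark~\ref{re:traborts} and yields the desired contradiction.
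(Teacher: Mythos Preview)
Your proposal is correct and rests on the same mechanism as the paper's proof---namely that all accesses to the value base object $v_j$ are cached, so a conflicting committed writer whose commit-cache falls between a reader's linearization point and its serialization point would invalidate the reader's tracking set via Remark~\ref{re:traborts}. The paper compresses this into a two-sentence sketch (``any two fast-path transactions accessing conflicting data sets must necessarily incur a tracking abort''), whereas you carry out the explicit case analysis on $\Wset(T_k)$, on the position of $\delta_{T_l}$ relative to $\ell$, and on the three variants of $\delta_{T_k}$; your version is more rigorous but not a different argument.
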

\begin{proof}
Recall that Algorithm~\ref{alg:inswrite} performs direct accesses only during the t-read operation in Line~\ref{line:hread} which involves reading the sequence lock $r_j$ corresponding to t-object $X_j$.
However, any two fast-path transactions accessing conflicting data sets must necessarily incur a tracking abort (cf. Remark~\ref{re:traborts}) in $E$. It follows immediately that $S$ must be legal.
\end{proof}
%
%
\begin{claim}
\label{cl:readfrom}
$S$ is legal, i.e., every t-read returns the value of the latest committed t-write in $S$.
\end{claim}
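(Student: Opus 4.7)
The plan is to fix an arbitrary t-read $\Read_k(X_j)\to v\neq A_k$ in $H$, identify the unique committed writer $T_m$ (or the initial configuration, if $X_j$ was never written) whose installation of $v$ into the base object $v_j$ was observed by $T_k$, and then establish that (a) $\delta_{T_m}<_S\delta_{T_k}$ and (b) no committed $T_n$ with $X_j\in\Wset(T_n)$ serializes strictly between $T_m$ and $T_k$. The argument rests on two structural invariants of Algorithm~\ref{alg:inswrite}. \emph{Invariant I:} for every committed updating $T$ with $X_j\in\Wset(T)$, the value of $v_j$ in shared memory immediately after $\delta_T$ equals the value $T$ wrote to $X_j$---for fast-path $T$ by the atomicity of \textit{commit-cache}, and for slow-path $T$ because the write at Line~\ref{line:write} strictly precedes the first write inside \textit{release} at Line~\ref{line:rel}, which is $\delta_T$. \emph{Invariant II:} no modification of $v_j$ occurs without a corresponding modification of the sequence number of $r_j$---atomically at \textit{commit-cache} for a fast-path writer (Lines~\ref{line:m2} and \ref{line:lin2}), and via the \textit{acquire}/\textit{release} pair (Lines~\ref{line:acq1}/\ref{line:rel1}) bracketing Line~\ref{line:write} for a slow-path writer. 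I take $T_m$ to be the latest committed writer whose installation into $v_j$ precedes $\ell_{\Read_k(X_j)}$; this choice immediately excludes any $T_n$ with $\delta_{T_n}\in(\delta_{T_m},\ell_{\Read_k(X_j)})$ by Invariant~I.

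I then split on the type of $T_k$ to handle $T_n$ with $\delta_{T_n}\in(\ell_{\Read_k(X_j)},\delta_{T_k})$. When $T_k$ is fast-path, the read of $v_j$ at Line~\ref{line:lin1} places $(v_j, ov_j, \shared)$ in $\tau_k$; any subsequent modification of $v_j$---whether a concurrent \textit{commit-cache} or the direct slow-path write at Line~\ref{line:write}---invalidates $\tau_k$ and, by Remark~\ref{re:traborts}, forces the next event of $T_k$ to be $A_k$. Since $\delta_{T_k}$ is either the \textit{commit-cache} of $T_k$ or the linearization point of a later successful t-read, such an invalidation cannot precede $\delta_{T_k}$; combined with the reasoning of Claim~\ref{cl:fast}, this excludes $T_n$. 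When $T_k$ is slow-path, I assume for contradiction such an intermediate $T_n$ exists. Invariant~II yields a modification of $r_j$ in the window $(\ell_{\Read_k(X_j)},\delta_{T_n}]$ that alters the sequence number stored in $or_j$ at Line~\ref{line:readorec}. Consequently every subsequent call to \textit{validate} by $T_k$---either within a later t-read or inside $\TryC_k$---must return false, so $T_k$ aborts before reaching $\delta_{T_k}$, contradicting that $\delta_{T_k}$ is well-defined after $\delta_{T_n}$.

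The main obstacle is the corner case in which $T_n$ is a slow-path updater whose \textit{acquire} of $r_j$ precedes $T_k$'s read at Line~\ref{line:readorec} while $\delta_{T_n}$ (its first write inside \textit{release}) lies after $\ell_{\Read_k(X_j)}$; here Invariant~II does not immediately supply a modification of $r_j$ strictly after Line~\ref{line:readorec}. I dispose of this by noting that $T_n$ still holds the lock on $r_j$ at $\ell_{\Read_k(X_j)}$ (since \textit{release} has not yet executed), so the value $or_j$ read by $T_k$ at Line~\ref{line:readorec} is in the locked state and the check at Line~\ref{line:abort0} forces $\Read_k(X_j)$ to return $A_k$, ruling out this configuration. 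The edge case where $v$ is the initial value of $X_j$ is handled by designating a sentinel predecessor of every transaction in $S$ and invoking only the (b)-argument unchanged. Invariant~I itself follows routinely from the mutual-exclusion guarantee of the per-object sequence locks enforced by \textit{acquire}.
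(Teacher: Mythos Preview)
Your proposal is correct and follows essentially the same strategy as the paper: identify the committed writer $T_m$ whose value the read observes, show it serializes before the reader, and rule out any intermediate committed writer $T_n$. The difference is organizational rather than substantive. The paper performs an explicit four-way case analysis on the types (fast-path vs.\ slow-path) of the source writer and the intermediate writer, and then separately handles the sub-case $\ell_{\Read_j(X_m)} <_E \ell_{\TryC_k}$ by showing the reader's serialization point must precede $\ell_{\TryC_k}$. You instead factor the common reasoning into two invariants (value-installation coincides with $\delta$ and every $v_j$-write is bracketed by $r_j$-modifications) and split primarily on the type of the \emph{reader}. Your approach is arguably cleaner, since the invariants absorb the writer-type case analysis; the paper's approach makes the individual contention/abort arguments more explicit.

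Two minor points where your write-up is slightly looser than the paper's (though not incorrect). First, ``$\delta_{T_m}<_S\delta_{T_k}$'' should be $<_E$ (serialization points are events), and you do not spell out why $\delta_{T_m}$ precedes $\ell_{\Read_k(X_j)}$ when $T_m$ is slow-path; this needs the observation that otherwise $T_m$ still holds $r_j$ and the reader aborts at Line~\ref{line:abort0} (or Line~\ref{line:hread}). Second, your statement that Invariant~II yields a modification of $r_j$ in the window $(\ell_{\Read_k(X_j)},\delta_{T_n}]$ is not literally true for a slow-path $T_n$ whose \textit{acquire} of $r_j$ falls between Line~\ref{line:readorec} and Line~\ref{line:read2}: the acquire precedes the window and the release of $r_j$ may come after $\delta_{T_n}$. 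The argument still goes through because the acquire itself already changes $r_j$ relative to $or_j$, so the next \textit{validate} (or the locked-check) catches it; but this sub-case is distinct from the corner case you isolate (acquire before Line~\ref{line:readorec}) and deserves a sentence of its own.
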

\begin{proof}
We claim that for every $\Read_j(X_m) \rightarrow v$, there exists some slow-path transaction $T_i$ (or resp. fast-path)
that performs $\Write_i(X_m,v)$ and completes the event in Line~\ref{line:write} (or resp. Line~\ref{line:lin2}) such that
$\Read_j(X_m) \not\prec_H^{RT} \Write_i(X_m,v)$.

Suppose that $T_i$ is a slow-path transaction:
since $\Read_j(X_m)$ returns the response $v$, the event in Line~\ref{line:read2}
succeeds the event in Line~\ref{line:write} performed by $\TryC_i$. 
Since $\Read_j(X_m)$ can return a non-abort response only after $T_i$ releases the lock on $r_m$ in
Line~\ref{line:rel1}, $T_i$ must be committed in $S$.
Consequently,
$\ell_{\TryC_i} <_E \ell_{\Read_j(X_m)}$.
Since, for any updating
committing transaction $T_i$, $\delta_{T_i}=\ell_{\TryC_i}$, it follows that
$\delta_{T_{i}} <_E \delta_{T_{j}}$.

Otherwise if $T_i$ is a fast-path transaction, then clearly $T_i$ is a committed transaction in $S$.
Recall that $\Read_j(X_m)$ can read $v$ during the event in Line~\ref{line:read2}
only after $T_i$ applies the $\ms{commit-cache}$ primitive.
By the assignment of linearization points, 
$\ell_{\TryC_i} <_E \ell_{\Read_j(X_m)}$ and thus, $\delta_{T_{i}} <_E \ell_{\Read_j(X_m)}$.

Thus, to prove that $S$ is legal, it suffices to show that  
there does not exist a
transaction $T_k$ that returns $C_k$ in $S$ and performs $\Write_k(X_m,v')$; $v'\neq v$ such that $T_i <_S T_k <_S T_j$. 

$T_i$ and $T_k$ are both updating transactions that commit. Thus, 
($T_i <_S T_k$) $\Longleftrightarrow$ ($\delta_{T_i} <_{E} \delta_{T_k}$) and
($\delta_{T_i} <_{E} \delta_{T_k}$) $\Longleftrightarrow$ ($\ell_{\TryC_i} <_{E} \ell_{\TryC_k}$).

Since, $T_j$ reads the value of $X$ written by $T_i$, one of the following is true:
$\ell_{\TryC_i} <_{E} \ell_{\TryC_k} <_{E} \ell_{\Read_j(X_m)}$ or
$\ell_{\TryC_i} <_{E} \ell_{\Read_j(X_m)} <_{E} \ell_{\TryC_k}$.

Suppose that $\ell_{\TryC_i} <_{E} \ell_{\TryC_k} <_{E} \ell_{\Read_j(X_m)}$.

(\textit{Case \RNum{1}:}) $T_i$ and $T_k$ are slow-path transactions.

Thus, $T_k$ returns a response from the event in Line~\ref{line:acq} 
before the read of the base object associated with $X_m$ by $T_j$ in Line~\ref{line:read2}. 
Since $T_i$ and $T_k$ are both committed in $E$, $T_k$ returns \emph{true} from the event in
Line~\ref{line:acq} only after $T_i$ releases $r_{m}$ in Line~\ref{line:rel1}.

If $T_j$ is a slow-path transaction, 
recall that $\Read_j(X_m)$ checks if $X_j$ is locked by a concurrent transaction, 
then performs read-validation (Line~\ref{line:abort0}) before returning a matching response. 
Indeed, $\Read_j(X_m)$ must return $A_j$ in any such execution.

If $T_j$ is a fast-path transaction, it follows that $\Read_j(X_m)$ must return $A_j$
immediately from Remark~\ref{re:traborts}.

Thus, $\ell_{\TryC_i} <_E \ell_{\Read_j(X)} <_{E} \ell_{\TryC_k}$.

(\textit{Case \RNum{2}:}) $T_i$ is a slow-path transaction and $T_k$ is a fast-path transaction.
Thus, $T_k$ returns $C_k$ 
before the read of the base object associated with $X_m$ by $T_j$ in Line~\ref{line:read2}, but after the response
of \emph{acquire} by $T_i$ in Line~\ref{line:acq}.
Since $\Read_j(X_m)$ reads the value of $X_m$ to be $v$ and not $v'$, $T_i$ performs the \emph{cas}
to $v_m$ in Line~\ref{line:write} after the $T_k$ performs the $\ms{commit-cache}$ primitive (since if
otherwise, $T_k$ would be aborted in $E$).
But then the \emph{cas} on $v_m$ performed by $T_i$ would return $\false$ and $T_i$ would return $A_i$---contradiction.

(\textit{Case \RNum{3}:}) $T_k$ is a slow-path transaction and $T_i$ is a fast-path transaction.
This is analogous to the above case.

(\textit{Case \RNum{4}:}) $T_i$ and $T_k$ are fast-path transactions.
Follows immediately from Claim~\ref{cl:fast}.

We now need to prove that $\delta_{T_{j}}$ indeed precedes $\ell_{\TryC_k}$ in $E$.
Consider the two possible cases.
Suppose that $T_j$ is a read-only transaction. 
Then, $\delta_{T_j}$ is assigned to the last t-read performed by $T_j$ that returns a non-A$_j$ value. 
If $\Read_j(X_m)$ is not the last t-read that returned a non-A$_j$ value, then there exists a $\Read_j(X')$ such that 
$\ell_{\Read_j(X_m)} <_{E} \ell_{\TryC_k} <_E \ell_{read_j(X')}$.
But then this t-read of $X'$ must abort by performing the checks in Line~\ref{line:abort0} or incur a tracking set abort---contradiction.

Otherwise suppose that $T_j$ is an updating transaction that commits, then $\delta_{T_j}=\ell_{\TryC_j}$ which implies that
$\ell_{read_j(X)} <_{E} \ell_{\TryC_k} <_E \ell_{\TryC_j}$. Then, $T_j$ must neccesarily perform the checks
in Line~\ref{line:abort3} and return $A_j$ or incur a tracking set abort---contradiction to the assumption that $T_j$ is a committed transaction.%
\end{proof}
Since $S$ is legal and respects the real-time ordering of transactions, Algorithm~\ref{alg:inswrite} is opaque.

\section{Proof of Theorem~\ref{th:impossibility}}
\label{app:lm}
The proof of the lemma below is a simple extension of the analogous lemma from \cite{hytm14disc}
allowing direct trivial accesses inside fast-path transactions which in turn is inspired by an analogous result concerning \emph{disjoint-access parallel} STMs~\cite{AHM09}. 
Intuitively, the proof follows follows from the fact that
the tracking set of a process executing a fast-path transaction is invalidated due to contention on a base
object with another transaction (cf. Remark~\ref{re:traborts}).
\begin{lemma}
\label{lm:hytm}
Let $\mathcal{M}$ be any progressive HyTM implementation in which fast-path transactions may perform trivial
direct accesses.
Let $E_1 \cdot E_2$ be an execution of $\mathcal{M}$ where
$E_1$ (and resp. $E_2$) is the step contention-free
execution fragment of transaction $T_1$ (and resp. $T_2$) executed by process $p_1$ (and resp. $p_2$),
$T_1$ and $T_2$ do not conflict in $E_1 \cdot E_2$, and
at least one of $T_1$ or $T_2$ is a fast-path transaction. 
Then, $T_1$ and $T_2$ do not contend on any base object in $E_1 \cdot E_2$.
\end{lemma}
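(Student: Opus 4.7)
The plan is to proceed by contradiction: assume $T_1$ and $T_2$ contend on some base object $b$ in $E_1 \cdot E_2$, and, without loss of generality, assume $T_1$ is the fast-path transaction. The central engine is Remark~\ref{re:traborts}: if the tracking set $\tau_1$ of $T_1$ still contains $b$ with mode $m$ at the moment $p_2$ is poised to apply a contending event, then $T_1$'s next event must be $A_1$, and since $T_2$ does not conflict with $T_1$ by hypothesis, this will contradict progressiveness. So most of the work is to arrange a configuration and a choice of contending event from $T_2$ on which Remark~\ref{re:traborts} fires.

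Let $e_1 \in E_1$ and $e_2 \in E_2$ be primitives on $b$ witnessing contention. First I would reduce to the case in which $T_1$ is $t$-incomplete at the relevant instant, truncating $E_1$ to end at (or just after) the step of $T_1$ that interacts with $b$; this preserves step-contention-freedom. I would then choose $e_2$ to be the \emph{first} event in $E_2$ that contends with $\tau_1$ on $b$, writing $E_2 = E_2' \cdot e_2 \cdot E_2''$, so that the prefix $E_2'$ does not invalidate $\tau_1$.

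In the primary case, $e_1$ is a cached access of $T_1$: the triple $(b,v,m)$ with $m = \exclusive$ if $e_1$ is nontrivial and $m = \shared$ otherwise is installed into $\tau_1$, and survives through $E_1 \cdot E_2'$ by the choice of $e_2$. The contention condition on $e_1,e_2$ precisely matches the invalidation condition of Remark~\ref{re:traborts}: any primitive on $b$ if $m = \exclusive$, and a nontrivial primitive on $b$ if $m = \shared$ (in the latter case $e_1$ is itself trivial, forcing $e_2$ to be nontrivial). Applying Remark~\ref{re:traborts} to the configuration after $E_1 \cdot E_2'$ with $p_2$ poised to apply $e_2$ shows that the next event of $T_1$ in any extension of $E_1 \cdot E_2' \cdot e_2$ must be $A_1$; TM-liveness lets us schedule that step, and progressiveness then demands a concurrent conflicting transaction, which does not exist---contradiction.

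The remaining case, where $e_1$ is a direct trivial access by $T_1$, is the main obstacle, since Remark~\ref{re:traborts} does not fire on $e_1$ itself. My plan here is to reduce to the primary case by replaying $T_2$ first: let $F_2$ be the step-contention-free execution of $T_2$ from the initial configuration, extended to completion and committed by progressiveness. Because $e_1$ is trivial it leaves $b$ unchanged, so the initial configuration and the post-$E_1$ configuration agree on $b$; this lets me argue that $F_2$ can be taken to perform on $b$ exactly the events $T_2$ performs in $E_2$. Then extending by a step-contention-free $T_1$ after $F_2$, invoking determinism of $T_1$ together with the fact that no t-object is shared with $T_2$, I would show that $T_1$ must eventually perform some \emph{cached} access on a base object whose state has been altered by $T_2$, reducing the situation to the primary case via a symmetric application of Remark~\ref{re:traborts}. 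Verifying that the reinterleaved schedules are genuine executions of $\mathcal{M}$ and identifying the cached access that closes the reduction is the crux of this case.
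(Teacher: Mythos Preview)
Your primary case has the right engine but skips the indistinguishability step: after truncating $E_1$ you must still argue that $T_2$ reaches $e_2$ in the truncated execution. The paper does this by taking prefixes $E_1', E_2'$ ending just before $e_1, e_2$ and showing that $E_1' \cdot E_2'$ is indistinguishable to $T_2$ from $E_1 \cdot E_2'$, so that after $\tilde E = E_1' \cdot E_2'$ \emph{both} processes are simultaneously poised to apply $e_1$ and $e_2$ on $b$.

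The genuine gap is your remaining case. The proposed reduction---run $T_2$ from the initial configuration, then run $T_1$, and claim $T_1$ ``must eventually perform some cached access on a base object whose state has been altered by $T_2$''---is unjustified. Nothing in the hypotheses forces $T_1$ to cache-access any object $T_2$ wrote; the only contended object under discussion is $b$, and by assumption $T_1$ accesses $b$ only via a direct trivial read, so $b$ never enters $\tau_1$. Determinism of $T_1$ and disjointness of data sets give you nothing here, since $T_2$ may well modify base objects that $T_1$ touches only directly. There is no route back to the primary case along these lines.

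The paper's treatment of this case is structurally different and much simpler. Having arranged that after $\tilde E$ both processes are poised at their contending events, it exploits the freedom to choose the \emph{order} in which $e_1$ and $e_2$ are applied. When $e_1$ is trivial (forcing $e_2$ to be nontrivial), it applies $e_2$ first---placing $b$ in $\tau_2$ in exclusive mode---and then applies $e_1$; any access to $b$ by $p_1$, even a direct trivial read, invalidates $\tau_2$, so Remark~\ref{re:traborts} forces $T_2$ to abort, contradicting progressiveness. In other words, the paper aborts whichever transaction performed the \emph{nontrivial} contending step, rather than the one you fixed up front as fast-path; your ``WLOG $T_1$ is fast-path'' commits you to aborting precisely the wrong transaction in the case where $e_1$ is a direct trivial access.
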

\begin{proof}
Suppose, by contradiction that $T_1$ and $T_2$ 
contend on the same base object in $E_1\cdot E_2$.

If in $E_1$, $T_1$ performs a nontrivial event on a base object on which they contend, let $e_1$ be the last
event in $E_1$ in which $T_1$ performs such an event to some base object $b$ and $e_2$, the first event
in $E_2$ that accesses $b$ (note that by assumption, $e_1$ is a direct access).
Otherwise, $T_i$ only performs trivial events in $E_1$ to base objects (some of which may be direct) on which it contends with $T_{2}$ in $E_1\cdot E_2$:
let $e_2$ be the first event in $E_2$ in which $E_2$ performs a nontrivial event to some base object $b$
on which they contend and $e_1$, the last event of $E_1$ in $T_1$ that accesses $b$.

Let $E_1'$ (and resp. $E_2'$) be the longest prefix of $E_1$ (and resp. $E_2$) that does not include
$e_1$ (and resp. $e_2$).
Since before accessing $b$, the execution is step contention-free for $T_1$, $E \cdot
E_1'\cdot E_2'$ is an execution of $\mathcal{M}$.
By assumption of lemma, $T_1$ and $T_2$ do not conflict in $E_1'\cdot E_2'$.
By construction, $E_1 \cdot E_2'$ is indistinguishable to $T_2$ from $E_1' \cdot E_2'$.
Hence, $T_1$ and $T_{2}$ are poised to apply contending events $e_1$ and $e_2$ on $b$ in the execution
$\tilde E=E_1' \cdot E_2'$.

We now consider two cases:
\begin{enumerate}
\item 
($e_1$ is a nontrivial event)
After $\tilde E\cdot e_1$, $b$ is contained in the tracking set of process
$p_1$ in exclusive mode and in the extension $\tilde E\cdot e_1 \cdot e_2$, we have that
$\tau_1$ is invalidated. Thus, by Remark~\ref{re:traborts}, transaction $T_1$ must return $A_1$ 
in any extension of $E\cdot e_1\cdot e_2$---a contradiction
to the assumption that $\mathcal{M}$ is progressive.   
\item
($e_1$ is a trivial event)
Recall that $e_1$ may be potentially an event involving a direct access.
Consider the execution $\tilde E\cdot e_2$ following which $b$ is contained in the tracking set of process
$p_{2}$ in exclusive mode. Clearly, we have an extension $\tilde E\cdot e_2 \cdot e_1$ in which
$\tau_{2}$ is invalidated. Thus transaction $T_{2}$ must return $A_{2}$ in any extension of $E\cdot e_2\cdot e_1$---a contradiction
to the assumption that $\mathcal{M}$ is progressive.   
\end{enumerate}
\end{proof}
\begin{theorem}
Let $\mathcal{M}$ be any progressive opaque HyTM implementation providing invisible reads.
There exists an execution $E$ of $\mathcal{M}$ and some slow-path read-only transaction $T_k \in \ms{txns}(E)$
that incurs a time complexity of $\Omega (m^2)$; $m=|\Rset(T_k)|$.
\end{theorem}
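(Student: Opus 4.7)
The plan is to mimic the structure of the sketch and build, for each $i \in \{2,\ldots,m\}$, a family of $i$ mutually indistinguishable-looking executions that a single t-read $\Read_\phi(X_i)$ must nonetheless distinguish. We fix $T_\phi$ to be a slow-path read-only transaction that reads $X_1,\ldots,X_m$ in order, and we will show by induction on $i$ that the $i$-th t-read of $T_\phi$ must touch at least $i-1$ distinct base objects. Summing the arithmetic progression $\sum_{i=2}^{m}(i-1)$ then yields the claimed $\Omega(m^2)$ bound.

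The first step is to construct the ``new-value'' witness execution $\alpha_i$: run $T_\phi$ step-contention-free through its first $i-1$ reads (each returning the initial value), then run a fast-path writer $T_i$ that executes $\Write_i(X_i,nv_i)$ and commits, then resume $T_\phi$ with $\Read_\phi(X_i)$. Invisible reads guarantee that $T_\phi$ has left no nontrivial trace in shared memory, so by Lemma~\ref{lm:hytm} (applied to $T_\phi$ and $T_i$, which have disjoint data sets) $T_i$ suffers no tracking-set abort, and it cannot suffer a capacity abort since $|\Dset(T_i)|=1$; by the TM-liveness assumption, $T_i$ thus commits. By opacity, $\Read_\phi(X_i)$ must return $nv_i$, since the real-time order forces $T_i \prec T_\phi$ in any serialization.

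The second step is to construct, for each $\ell \in \{1,\ldots,i-1\}$, an alternative execution $\beta_i^\ell$: run the first $i-1$ reads of $T_\phi$ as before, then insert a committing fast-path writer $T_\ell$ with $\Write_\ell(X_\ell,nv_\ell)$, then the committing $T_i$ with $\Write_i(X_i,nv_i)$, and finally $\Read_\phi(X_i)$. Applying Lemma~\ref{lm:hytm} twice (once to $(T_\ell, T_\phi)$ and once to $(T_\ell, T_i)$, whose data sets are pairwise disjoint) we conclude that $T_\ell$ and $T_i$ both commit and that the three transactions touch disjoint sets of base objects. Now the only real-time-respecting serialization is $T_\ell, T_i, T_\phi$; but $T_\phi$'s earlier $\Read_\phi(X_\ell)$ returned the initial value, which is not legal in this order after $T_\ell$'s write unless $\Read_\phi(X_i)$ returns the initial value as well. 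So in $\beta_i^\ell$ the $i$-th read must return the initial value, while in $\alpha_i$ it must return $nv_i$.

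The third step turns these $i$ distinguishability requirements into a step-complexity lower bound. Since in $\alpha_i$ and $\beta_i^\ell$ the local state of $T_\phi$ immediately before $\Read_\phi(X_i)$ is identical (invisible reads, same prefix of $T_\phi$), the t-read $\Read_\phi(X_i)$ can only distinguish $\alpha_i$ from $\beta_i^\ell$ by accessing some base object whose value differs in the two executions, i.e., a base object written by $T_\ell$. Applying Lemma~\ref{lm:hytm} to pairs $(T_\ell,T_{\ell'})$ with $\ell \neq \ell'$ shows that the fast-path writers $T_1,\ldots,T_{i-1}$ modify pairwise disjoint sets of base objects; hence $\Read_\phi(X_i)$ must access at least $i-1$ distinct base objects. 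The main obstacle, and the reason Lemma~\ref{lm:hytm} is essential, is precisely this disjointness argument: one must rule out the possibility that a single metadata access by $T_\phi$ simultaneously witnesses writes from several $T_\ell$'s, which would destroy the linear-per-read bound. Once this is established, the induction closes and the $\Omega(m^2)$ total cost follows.
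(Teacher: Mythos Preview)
Your outline follows the paper's argument closely, and steps~2 and~3 are essentially correct. There is, however, a genuine gap in step~1.

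You claim that in your execution $\alpha_i$ (first the $i-1$ reads of $T_\phi$, then $T_i$ writes and commits, then $\Read_\phi(X_i)$), ``the real-time order forces $T_i \prec T_\phi$ in any serialization.'' This is false: $T_\phi$ begins before $T_i$ and is still incomplete when $T_i$ commits, so $T_\phi$ and $T_i$ are concurrent, and the serialization $T_\phi, T_i$ is perfectly compatible with real time. With that serialization, $\Read_\phi(X_i)\to v$ is opaque. So opacity by itself does \emph{not} force the read to return $nv_i$ in the execution you wrote down, and without $nv_i$ in the baseline you lose the distinguishability contrast with every $\beta_i^\ell$. The paper closes this gap by first considering the execution in which $T_i$ runs \emph{entirely before} $T_\phi$ begins; there real time does force $T_i \prec T_\phi$, so $\Read_\phi(X_i)$ must return $nv_i$. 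Then, because Lemma~\ref{lm:hytm} (together with invisible reads) shows that $T_i$ and the prefix of $T_\phi$ do not contend on any base object, the two orderings are indistinguishable to $T_\phi$, and hence the read must return $nv_i$ in your ordering as well. You already invoke Lemma~\ref{lm:hytm} in step~1, so you have the ingredient; you just need to route the argument through the ``$T_i$-first'' execution.

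A second, smaller point: in step~2 you conclude that $\Read_\phi(X_i)$ ``must return the initial value.'' Opacity only rules out the response $nv_i$; the read is also allowed to return $A_\phi$. The paper explicitly carries both cases ($v$ or $A_\phi$). This does not damage your step~3, since either response differs from the baseline $nv_i$, but the case split should be stated.
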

\begin{proof}
For all $i\in \{1,\ldots , m\}$; $m \in \mathbb{N}$, let 
$v$ be the initial value of t-object $X_i$.
Let $\pi^{m}$ denote the complete step contention-free execution of a slow-path transaction
$T_{\phi}$ that performs ${m}$ t-reads: $\Read_{\phi}(X_1)\cdots \Read_{\phi}(X_{m})$
such that for all $i\in \{1,\ldots , m \}$, $\Read_{\phi}(X_i) \rightarrow v$.
\begin{claim}
\label{cl:readdap}
For all $i\in \mathbb{N}$, $\mathcal{M}$ has an execution of the form $\pi^{i-1}\cdot \rho^i\cdot \alpha^i$ where,
\begin{itemize}
\item
$\pi^{i-1}$ is the complete step contention-free execution of slow-path read-only transaction $T_{\phi}$ that performs
$(i-1)$ t-reads: $\Read_{\phi}(X_1)\cdots \Read_{\phi}(X_{i-1})$,
\item
$\rho^i$ is the t-complete step contention-free execution of a fast-path transaction $T_{i}$
that writes $nv_i\neq v_i$ to $X_i$ and commits,
\item
$\alpha^i$ is the complete step contention-free execution fragment of $T_{\phi}$ that performs its $i^{th}$ t-read:
$\Read_{\phi}(X_i) \rightarrow nv_i$.
\end{itemize}
\end{claim}
\begin{proof}
$\mathcal{M}$ has an execution of the form $\rho^i\cdot \pi^{i-1}$.
Since $\Dset(T_k) \cap \Dset(T_{i}) =\emptyset$ in $\rho^i\cdot \pi^{i-1}$,
by Lemma~\ref{lm:hytm}, transactions $T_{\phi}$ and $T_i$ do not contend
on any base object in execution $\rho^i\cdot \pi^{i-1}$.
Moreover, since they each access a single t-object, fast-path transaction $T_i$ cannot incur a capacity abort.
Thus, $\rho^i\cdot \pi^{i-1}$ is also an execution of $\mathcal{M}$.

By opacity, $\rho^i\cdot \pi^{i-1} \cdot \alpha^i$ (Figure~\ref{sfig:inv-1}) is an execution
of $\mathcal{M}$ in which the t-read of $X_i$ performed by $T_{\phi}$ must return $nv_i$.
But $\rho^i \cdot \pi^{i-1} \cdot \alpha^i$ is indistinguishable to $T_{\phi}$ from
$\pi^{i-1}\cdot \rho^i \cdot \alpha^i$.
Thus, $\mathcal{M}$ has an execution of the form $\pi^{i-1}\cdot \rho^i \cdot \alpha^i$ (Figure~\ref{sfig:inv-2}).
\end{proof}
For each $i\in \{2,\ldots, m\}$, $j\in \{1,2\}$ and $\ell \leq (i-1)$, 
we now define an execution of the form  $\mathbb{E}_{j\ell}^{i}=\pi^{i-1}\cdot \beta^{\ell}\cdot \rho^i \cdot \alpha_j^i$
as follows:
\begin{itemize}
\item
$\beta^{\ell}$ is the t-complete step contention-free execution fragment of a fast-path transaction $T_{\ell}$
that writes $nv_{\ell}\neq v$ to $X_{\ell}$ and commits
\item
$\alpha_1^i$ (and resp. $\alpha_2^i$) is the complete step contention-free execution fragment of 
$\Read_{\phi}(X_i) \rightarrow v$ (and resp. $\Read_{\phi}(X_i) \rightarrow A_{\phi}$).
\end{itemize}
Note that in the execution so defined above, we assume that each fast-path transactions $T_i$ and $T_{\ell}$;$\ell \leq (i-1)$ are executed by distinct processes.
\begin{claim}
\label{cl:ic2}
For all $i\in \{2,\ldots, m\}$ and $\ell \leq (i-1)$, $\mathcal{M}$ has an execution of the form $\mathbb{E}_{1\ell}^{i}$ or 
$\mathbb{E}_{2\ell}^{i}$.
\end{claim}
%
\begin{proof}
Note that by our assumption on capacity aborts, fast-path transactions $T_i$ and $T_{\ell}$ cannot incur capacity aborts in the defined execution.

For all $i \in \{2,\ldots, m\}$, $\pi^{i-1}$
is an execution of $\mathcal{M}$.
By assumption of invisible reads, $T_{{\ell}}$ must be committed in $\pi^{i-1}\cdot \rho^{\ell}$
and $\mathcal{M}$ has an execution of the form $\pi^{i-1}\cdot \beta^{\ell}$.
By the same reasoning, since $T_i$ and $T_{\ell}$ do not have conflicting data sets,
$\mathcal{M}$ has an execution of the form $\pi^{i-1}\cdot\beta^{\ell}\cdot \rho^i$.

Since the configuration after $\pi^{i-1}\cdot\beta^{\ell}\cdot \rho^i$ is quiescent,
$\pi^{i-1}\cdot\beta^{\ell}\cdot \rho^i$ extended with $\Read_{\phi}(X_i)$
must return a matching response.
If $\Read_{\phi}(X_i) \rightarrow v_i$, then clearly $\mathbb{E}_{1}^{i}$
is an execution of $M$ with $T_{\phi}, T_{i-1}, T_i$ being a valid serialization
of transactions.
If $\Read_{\phi}(X_i) \rightarrow A_{\phi}$, the same serialization
justifies an opaque execution.

Suppose by contradiction that there exists an execution of $\mathcal{M}$ such that
$\pi^{i-1}\cdot\beta^{\ell}\cdot \rho^i$ is extended with the complete execution
of $\Read_{\phi}(X_i) \rightarrow r$; $r \not\in \{A_{\phi},v\}$. 
The only plausible case to analyse is when $r=nv$.
Since $\Read_{\phi}(X_i)$ returns the value of $X_i$ updated by $T_i$, 
the only possible serialization for transactions is $T_{\ell}$, $T_i$, $T_{\phi}$; but $\Read_{\phi}(X_{\ell})$
performed by $T_k$ that returns the initial value $v$
is not legal in this serialization---contradiction.
\end{proof}
\begin{claim}
For all $i\in \{2,\ldots, m\}$, $j\in \{1,2\}$ and $\ell \leq (i-1)$, slow-path transaction $T_{\phi}$ must access
$(i-1)$ different base objects during the execution of $\Read_{\phi}(X_i)$ in the execution
$\pi^{i-1}\cdot \beta^{\ell}\cdot \rho^i \cdot \alpha_j^i$.
\end{claim}
\begin{proof}
Consider the $(i-1)$ different executions: 
$\pi^{i-1}\cdot\beta^{1}\cdot \rho^i$, $\ldots$, $\pi^{i-1}\cdot\beta^{i-1}\cdot \rho^i$ (cf. Figure~\ref{sfig:inv-3}).
For all $\ell, \ell' \leq (i-1)$;$\ell' \neq \ell$, 
$\mathcal{M}$ has an execution of the form $\pi^{i-1}\cdot \beta^{\ell}\cdot \rho^i \cdot \beta^{\ell'}$
in which fast-path transactions $T_{\ell}$ (executed by process $p_{\ell}$) and $T_{\ell'}$ (executed by process $p_{\ell'}$) access mutually disjoint data sets.
By invisible reads and Lemma~\ref{lm:hytm}, the pairs of transactions $T_{\ell'}$, $T_{i}$ and $T_{\ell'}$, $T_{\ell}$
do not contend on any base object in this execution.
This implies that $\pi^{i-1}\cdot \beta^{\ell} \cdot \beta^{\ell'} \cdot \rho^i$ is an execution of $\mathcal{M}$ in which
transactions $T_{\ell}$ and $T_{\ell'}$ each apply nontrivial primitives
to mutually disjoint sets of base objects in the execution fragments $\beta^{\ell}$ and $\beta^{\ell'}$ respectively.

This implies that for any $j\in \{1,2\}$, $\ell \leq (i-1)$, the configuration $C^i$ after $E^i$ differs from the configurations
after $\mathbb{E}_{j\ell}^{i}$ only in the states of the base objects that are accessed in the fragment $\beta^{\ell}$.
Consequently, slow-path transaction $T_{\phi}$ must access at least $i-1$ different base objects
in the execution fragment $\pi_j^i$
to distinguish configuration $C^i$ from the configurations
that result after the $(i-1)$ different executions 
$\pi^{i-1}\cdot\beta^{1}\cdot \rho^i$, $\ldots$, $\pi^{i-1}\cdot\beta^{i-1}\cdot \rho^i$ respectively.
\end{proof}
Thus, for all $i \in \{2,\ldots, m\}$, slow-path transaction $T_{\phi}$ must perform at least $i-1$ steps 
while executing the $i^{th}$ t-read in the execution fragment $\pi_{j}^i$. 
Inductively, this gives the $\sum\limits_{i=1}^{m-1} i=\frac{m(m-1)}{2}$ step complexity for $T_{\phi}$ thus completing the proof.
\end{proof}

\end{document}